\newtheorem{thm}{Theorem}
\newtheorem*{thm*}{Theorem}
\newtheorem{prop}{Proposition}
\newtheorem{lemma}{Lemma}
\newtheorem{cor}{Corollary}
\providecommand{\norm}[1]{\lVert#1\rVert}
\providecommand{\abs}[1]{\lvert#1\rvert}
\providecommand{\inner}[2]{\langle#1,#2\rangle}
\begin{document}

\title{The classical limit of a state on the Weyl algebra}

\author{Benjamin H.~Feintzeig
}
\affiliation{Department of Philosophy\\University of Washington}
\date{\today}

\begin{abstract}
This paper considers states on the Weyl algebra of the canonical commutation relations over the phase space $\mathbb{R}^{2n}$.  We show that a state is regular iff its classical limit is a countably additive Borel probability measure on $\mathbb{R}^{2n}$.  It follows that one can ``reduce'' the state space of the Weyl algebra by altering the collection of quantum mechanical observables so that all states are ones whose classical limit is physical.
\end{abstract}

\maketitle

\section{Introduction}
\label{sec:intro}
In quantum theories we often restrict attention to \emph{regular} states, thereby ruling out non-regular states as unphysical or pathological.  Yet a number of authors have investigated the possibility of using nonregular states as well for some purposes in quantum mechanics,\citep{BeMaPeSi74,FaVeWe74,CaMoSt99, Ha04} quantum field theory,\citep{AcMoSt93} and quantum gravity.\citep{CoVuZa07,As09}  The restriction to regular states is often seen as merely instrumental or for technical convenience, and so one might be tempted to follow these authors and pursue the use of nonregular states.  However, the goal of this paper is to propose a principled justification for focusing on regular states.  We will show that the regular states play a privileged role in explaining the success of classical physics.  

We explain the success of classical physics through the \emph{classical limit}.  The classical limit of a quantum state, in the sense discussed in this paper, is the result of looking at approximations on larger and larger scales until the effects of quantum mechanics disappear.  The central result of this paper shows that we can recover all of the classical states of a classical theory by taking the classical limits of only regular states.  The classical limits of non-regular states play no role in classical physics.

We will deal only with theories with finitely many degrees of freedom modeled by a phase space $\mathbb{R}^{2n}$, in which case the physical states of the classical theory are countably additive Borel probability measures on $\mathbb{R}^{2n}$.  Quantum states are then positive, normalized linear functionals on the C*-algebra of (bounded) physical magnitudes, or \emph{observables}, satisfying the canonical commutation relations.  This algebra is known as the Weyl algebra over $\mathbb{R}^{2n}$.  The central purpose of this paper is thus to establish the following claim:

\begin{thm*}
A state on the Weyl algebra over $\mathbb{R}^{2n}$ is regular iff its classical limit is a countably additive Borel probability measure on $\mathbb{R}^{2n}$.
\end{thm*}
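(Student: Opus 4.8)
The plan is to characterize the classical limit of $\omega$ as a regular Borel probability measure on the Bohr compactification $b\mathbb{R}^{2n}$ of phase space and then to show, via Bochner's theorem, that this measure is carried by the dense copy of $\mathbb{R}^{2n}$ exactly when $\omega$ is regular. Recall that the classical algebra of observables is the C*-algebra of almost periodic functions on $\mathbb{R}^{2n}$, generated by the plane waves $\chi_f\colon x\mapsto e^{i\sigma(f,x)}$, and that this algebra is isomorphic to $C(b\mathbb{R}^{2n})$. By the Riesz--Markov theorem the classical limit $\omega_0$ therefore corresponds to a unique regular Borel probability measure $\mu$ on $b\mathbb{R}^{2n}$, with characteristic function $\hat\omega_0(f)=\omega_0(\chi_f)=\int_{b\mathbb{R}^{2n}}\chi_f\,d\mu$. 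Since the symplectic twist in the Weyl relations scales with $\hbar$ and so disappears in the limit, $\hat\omega_0$ is an ordinary (untwisted) positive-definite function with $\hat\omega_0(0)=1$, obtained as the limit of the quantum characteristic functions built from $\omega$. The theorem then splits into two equivalences: first, that $\omega$ is regular iff $\hat\omega_0$ is continuous; and second, that $\hat\omega_0$ is continuous iff $\mu$ is concentrated on $\mathbb{R}^{2n}$ and hence restricts to a countably additive Borel probability measure there.

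I would first dispatch the routine implications. If the classical limit is a countably additive Borel probability measure $\nu$ on $\mathbb{R}^{2n}$ itself, then its Fourier transform $f\mapsto\int_{\mathbb{R}^{2n}}e^{i\sigma(f,x)}\,d\nu(x)=\hat\omega_0(f)$ is continuous by dominated convergence, and tracing continuity back through the construction of the limit shows each $t\mapsto\omega(W(tf))$ is continuous, i.e.\ $\omega$ is regular. The converse of this first equivalence --- regularity forcing continuity of the \emph{limiting} characteristic function --- requires passing continuity through the $\hbar\to0$ limit, for which I would establish enough uniform control (a L\'evy-type continuity argument) on the quantum characteristic functions to guarantee the limit is continuous rather than merely a pointwise limit of continuous functions.

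The substance lies in the second equivalence, for which I would use the generalized Bochner theorem. Continuity of the positive-definite function $\hat\omega_0$ on $\mathbb{R}^{2n}$ yields, by Bochner's theorem, a countably additive Borel probability measure $\nu$ on $\mathbb{R}^{2n}$ with $\int_{\mathbb{R}^{2n}}\chi_f\,d\nu=\hat\omega_0(f)$ for all $f$. Viewing $\mathbb{R}^{2n}$ as a Borel subset of $b\mathbb{R}^{2n}$ through the canonical injection --- its image is $\sigma$-compact, hence Borel --- the measure $\nu$ pushes forward to a regular Borel probability measure $\tilde\nu$ on $b\mathbb{R}^{2n}$ with $\int\chi_f\,d\tilde\nu=\hat\omega_0(f)=\int\chi_f\,d\mu$ for every $f$. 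Because the linear span of $\{\chi_f\}$ is closed under products and conjugation, contains the constants, and separates points, Stone--Weierstrass makes it dense in $C(b\mathbb{R}^{2n})$; the uniqueness clause of Riesz--Markov then forces $\tilde\nu=\mu$. Hence $\mu$ is concentrated on $\mathbb{R}^{2n}$ and its restriction is the desired measure $\nu$.

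I expect the main obstacle to be the measure theory on $b\mathbb{R}^{2n}$, which is compact but not metrizable and inside which $\mathbb{R}^{2n}$ sits as a dense, non-closed subset: one must verify that this copy of $\mathbb{R}^{2n}$ is genuinely measurable, that the push-forward $\tilde\nu$ is well defined and regular, and that ``concentrated on $\mathbb{R}^{2n}$'' delivers a bona fide countably additive measure there rather than only a finitely additive set function on a cylinder algebra. The Stone--Weierstrass/Riesz--Markov uniqueness step is the linchpin that converts the equality of characteristic functions into equality of measures, and together with the uniform control needed to push continuity through the classical limit, it is where I would concentrate the effort.
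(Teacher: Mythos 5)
Your proposal is essentially correct, and in the substantive direction it takes a genuinely different route from the paper's. The paper argues at the representation level: it forms the GNS representation of $\omega\circ\mathcal{Q}_h$ on $C(\mathscr{P})$ (with $\mathscr{P}$ the Bohr compactification), extracts the spectral projection-valued measure, and invokes the SNAG theorem---the PVM-level generalization of Bochner---for the weak-operator continuous unitary group $x\mapsto\pi_{\omega\circ\mathcal{Q}_h}(W_0(x))$ to conclude $\mu_{\omega\circ\mathcal{Q}_h}(\mathscr{P}\setminus k[\mathbb{R}^{2n}])=0$; concentration is then converted into a countably additive Borel measure on $\mathbb{R}^{2n}$ using the same measurability facts you cite ($k[\mathbb{R}^{2n}]$ is $\sigma$-compact, hence Borel, and $k$ is a Borel isomorphism onto its image). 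Your scalar route---ordinary Bochner applied to the characteristic function, push-forward of the resulting measure $\nu$ along $k$, and identification $\tilde\nu=\mu$ via Stone--Weierstrass density of trigonometric polynomials in $C(b\mathbb{R}^{2n})$ plus Riesz--Markov uniqueness---is more elementary and equally valid, provided you verify (as you correctly flag) that $\tilde\nu$ is Radon so that uniqueness among regular measures applies; this follows from tightness of $\nu$ and continuity of $k$. Your backward direction (dominated convergence over $k[\mathbb{R}^{2n}]$) is exactly the paper's.

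Two corrections of emphasis, one of which is a real gap. First, the obstacle where you propose to ``concentrate the effort''---passing continuity through the $\hbar\to 0$ limit by a L\'evy-type uniform-control argument---is a phantom: in this framework the classical limit is \emph{defined} as the exact composition $\hat\omega=\omega\circ\mathcal{Q}_h$ (the value at $0$ of the constant section of the continuous field), so $\hat\omega(W_0(x))=\omega(W_h(x))$ identically; the classical characteristic function coincides with the quantum one, no limit of characteristic functions is ever taken, and regularity of $\omega$ is definitionally the ray-continuity of $\hat\omega$. Second, the genuine gap in your first equivalence is the opposite of what you anticipate: regularity gives continuity of $t\mapsto\hat\omega(W_0(tx))$ only along rays through the origin, whereas Bochner's theorem requires continuity of $\hat\omega_0$ in the topology of $\mathbb{R}^{2n}$. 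This is repaired by the standard positive-definiteness estimate $\abs{\hat\omega_0(x)-\hat\omega_0(y)}^2\le 2\bigl(1-\mathrm{Re}\,\hat\omega_0(x-y)\bigr)$, which upgrades ray-continuity at $0$ along the coordinate axes to uniform continuity on all of $\mathbb{R}^{2n}$; the paper sidesteps the issue by citing the standard fact that a regular state has a weak-operator continuous GNS Weyl system, which is the same point in representation-theoretic dress. Note finally that the paper proves the statement for arbitrary regular \emph{functionals} by decomposing them into regular states (Kadison--Ringrose Thm.~7.4.7), but since the headline statement concerns states, your restriction to states is harmless.
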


\noindent The intended interpretation is that one need only use the classical limits of regular quantum states to explain the success of classical physics.

Moreover, this discussion has import for the construction of quantum theories.  We construct quantum theories, at least in the algebraic approach, by constructing their algebra of observables.  But in many cases there is no widespread consensus about precisely which algebra we should use to represent the observables of a given quantum system.\citep{AsIs92,Em97,Fe17a}  Although the Weyl algebra over $\mathbb{R}^{2n}$ is one commonly used tool, it always allows for non-regular states.  If one has the desire to restrict attention to regular states, then one might look for another algebra that allows for only regular states.\footnote{For more on alternatives to the Weyl algebra for a C*-algebra of the canonical commutation relations, see \citet{We84,La90b,BuGr08,GrNe09}}  We demonstrate that such an algebra exists.

It is known that each C*-algebra uniquely determines its state space.\citep{AlSh01}  And if this state space is ``too large'' in the sense that it contains unphysical or pathological states, then there is a general procedure (established in previous work) for constructing a new C*-algebra whose state space consists in precisely the collection of physical states:\footnote{For related work, see also \citet{La95}.}
\begin{thm}[\citet{Fe17a}]
Let $\mathfrak{A}$ be a C*-algebra and let $V\subseteq\mathfrak{A}^*$.  Then there exists a unique C*-algebra $\mathfrak{B}$ and a surjective *-homomorphism $f:\mathfrak{A}\rightarrow\mathfrak{B}$ such that $\mathfrak{B}^*\cong V$ with the isomorphism given by $\omega\in\mathfrak{B}^*\mapsto (\omega\circ f)\in V$ iff the following conditions hold:
\begin{enumerate}[(i)]
\label{thm:reduction}
\item V is a weak* closed subspace; and
\item for all $A,B\in\mathfrak{A}$,
\[\sup_{\omega\in V;\norm{\omega}=1}\abs{\omega(AB)}\leq \sup_{\omega\in V;\norm{\omega}=1}\abs{\omega(A)}\sup_{\omega\in V;\norm{\omega}=1}\abs{\omega(B)}\]
\end{enumerate}
\end{thm}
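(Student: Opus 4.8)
The plan is to realize $\mathfrak{B}$ as a quotient of $\mathfrak{A}$. Any surjective *-homomorphism $f$ factors through $\mathfrak{A}/\ker f$ via an isometric *-isomorphism, so $f$ is a metric surjection whose kernel is a closed two-sided ideal; and the dual of a quotient Banach space is the annihilator of the subspace quotiented out. Thus the problem reduces to determining which norm-closed subspaces $I\subseteq\mathfrak{A}$ are closed two-sided ideals with $I^\perp = V$. Throughout I would work with the seminorm
\[
p(A) = \sup_{\omega\in V;\,\norm{\omega}=1}\abs{\omega(A)},
\]
whose kernel is the pre-annihilator ${}^\perp V = \{A\in\mathfrak{A} : \omega(A)=0 \text{ for all } \omega\in V\}$, and I would take $I = {}^\perp V$.

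For the direction assuming (i) and (ii): condition (ii) says precisely that $p$ is submultiplicative, so if $p(A)=0$ then $p(AB)\leq p(A)p(B)=0$ and likewise $p(BA)=0$ for every $B\in\mathfrak{A}$; hence $I=\ker p$ is a two-sided ideal, automatically self-adjoint since closed two-sided ideals in a C*-algebra always are. I would then set $\mathfrak{B}=\mathfrak{A}/I$ with the quotient C*-norm and let $f$ be the quotient map, a surjective *-homomorphism. Standard Banach-space duality gives $\mathfrak{B}^*\cong I^\perp$ isometrically via $\omega\mapsto\omega\circ f$, and the bipolar theorem gives $I^\perp = ({}^\perp V)^\perp = \overline{V}^{\,w*}$, which equals $V$ exactly because of the weak* closedness in (i). This produces the required $\mathfrak{B}$ and $f$.

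For the converse, suppose such $\mathfrak{B}$ and $f$ exist. Factoring $f$ through $\mathfrak{A}/\ker f$ shows that $\ker f$ is a closed two-sided ideal and that $\omega\mapsto\omega\circ f$ identifies $\mathfrak{B}^*$ isometrically with $(\ker f)^\perp$; comparison with the hypothesis forces $V=(\ker f)^\perp$, an annihilator and hence weak* closed, which is (i). For (ii), the isometry of this identification yields $p(A)=\norm{f(A)}_{\mathfrak{B}}$ (equivalently, $p$ is the quotient C*-norm, since $I={}^\perp V=\ker f$), so submultiplicativity of the C*-norm together with the homomorphism property gives $p(AB)=\norm{f(A)f(B)}\leq\norm{f(A)}\norm{f(B)}=p(A)p(B)$, which is (ii). Uniqueness follows because any admissible $f$ must have $\ker f = {}^\perp V$, so $\mathfrak{B}\cong\mathfrak{A}/{}^\perp V$ canonically and compatibly with $f$.

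The main obstacle I anticipate is pinning down the exact role of (ii): the substance of the theorem is that, given (i), condition (ii) is equivalent to $I={}^\perp V$ being a two-sided ideal, which is what makes $\mathfrak{A}/I$ a C*-algebra rather than merely a Banach space. The clean way to see both implications is the identification $p=\norm{f(\cdot)}$, i.e.\ that the seminorm cut out by $V$ coincides with the quotient C*-norm; establishing this identification, together with the isometry of the duality $\mathfrak{B}^*\cong V$, is the technical heart, after which submultiplicativity of $p$ and the ideal property become two sides of the same coin. The remaining ingredients---that the quotient of a C*-algebra by a closed two-sided ideal is a C*-algebra, that such ideals are self-adjoint, and the double-annihilator/bipolar identities---are all standard.
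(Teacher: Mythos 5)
Your proposal is correct and follows essentially the same route the paper indicates for this result: the paper (deferring to Fe17a) says the proof proceeds by showing $\mathfrak{B}$ is *-isomorphic to $\mathfrak{A}/\mathcal{N}(V)$ with $\mathcal{N}(V)$ the annihilator ideal of $V$, which is precisely your construction via $I={}^\perp V$, the bipolar theorem against condition (i), and the identification $p(A)=\norm{f(A)}_{\mathfrak{B}}$ making condition (ii) equivalent to the ideal property. The supporting steps you invoke (self-adjointness of closed two-sided ideals, isometry of the quotient duality $(\mathfrak{A}/I)^*\cong I^\perp$, and $\ker f={}^\perp V$ forcing uniqueness) are all sound.
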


\noindent To prove Thm. \ref{thm:reduction}, one shows the C*-algebra $\mathfrak{B}$ is *-isomorphic to $\mathfrak{A}/\mathcal{N}(V)$, where $\mathcal{N}(V)$ is the ideal given by the annihilator of $V$ in $\mathfrak{B}$.  This allows one to choose a subspace of $\mathfrak{A}^*$ satisfying (i) and (ii) and construct a unique C*-algebra with precisely the desired dual space.  Thus, we can use Thm. \ref{thm:reduction} to find a new algebra---constructed from the Weyl algebra---which allows for only regular states.  The quantum states on this new algebra suffice for explaining the success of classical physics through the classical limit.

The result of this process---a C*-algebra with so-called \emph{full regularity}---is already well known,\citep{La90b,GrNe09} but it is usually obtained by means of twisted group algebras. \citet{GrNe09} attempt to take this standard approach and extend it to infinite-dimensional phase spaces, where finite-dimensional tools fail.  The goal of the current paper, by contrast, is to establish a new approach already in the finite-dimensional case, with the hope that future work on this approach can illuminate constructions in infinite dimensions.  The additional benefit of the new approach, however, will be a \emph{justification} for the regularity condition by appeal to the classical limit, demonstrating the importance of future work in this area.

The structure of the paper is as follows.  In \S\ref{sec:prelim}, I define the Weyl algebra over $\mathbb{R}^{2n}$ and the notion of a regular state.  In \S\ref{sec:limits}, I clarify the notion of the ``classical limit" of a state on the Weyl algebra using a continuous field of C*-algebras.  In Section \ref{sec:states}, I establish some small lemmas to characterize the countably additive Borel measures on $\mathbb{R}^{2n}$ in terms of algebraic structure, and I apply Thm. \ref{thm:reduction} to the purely classical system with phase space $\mathbb{R}^{2n}$.  Finally, in Section \ref{sec:result}, I prove the main result and discuss its significance.  The results of this paper involve little, if any, mathematical novelty.  I hope, however, that the perspective I provide on the construction of new quantum theories is of interest.

\section{Preliminaries}
\label{sec:prelim}

The Weyl algebra over $\mathbb{R}^{2n}$ is formed by deforming the product of the C*-algebra $AP(\mathbb{R}^{2n})$ of complex-valued almost periodic functions on $\mathbb{R}^{2n}$.  The C*-algebra $AP(\mathbb{R}^{2n})$ is generated by functions $W_0(x):\mathbb{R}^{2n}\rightarrow\mathbb{C}$ for each $x\in\mathbb{R}^{2n}$ defined by
\[W_0(x)(y) := e^{i x\cdot y}\]
for all $y\in\mathbb{R}^{2n}$, where $\cdot$ is the standard inner product on $\mathbb{R}^{2n}$.  Polynomials (with respect to pointwise multiplication, addition, and complex conjugation) of functions of the form $W_0(x)$ for $x\in\mathbb{R}^{2n}$ are norm dense in $AP(\mathbb{R}^{2n})$ with respect to the standard supremum norm.\footnote{For more on the algebra of almost periodic functions, see \citet{He53,BiHoRi04a}.}

The \emph{Weyl algebra} over $\mathbb{R}^{2n}$, denoted $\mathcal{W}_h(\mathbb{R}^{2n})$, for $h\in(0,1]$ is generated from the same set of functions by defining a new multiplication operation. The symbol $W_h(x)\in\mathcal{W}_h(\mathbb{R}^{2n})$ is used now to denote the element $W_0(x)$ as it is considered in the new C*-algebra.  Define the non-commutative multiplication operation on $\mathcal{W}_h(\mathbb{R}^{2n})$ by
\[W_h(x)W_h(y) := e^{\frac{ih}{2}\sigma(x,y)}W_h(x+y)\]
for all $x,y\in\mathbb{R}^{2n}$, where $\sigma$ is the standard symplectic form on $\mathbb{R}^{2n}$ given by
\[\sigma((a,b),(a',b')) := a'\cdot b - a\cdot b'\]
for $a,b,a',b'\in\mathbb{R}^n$ and $\cdot$ is now the standard inner product on $\mathbb{R}^n$.  The Weyl algebra is the norm completion in the minimal regular norm\citep{MaSiTeVe74,Pe90,BiHoRi04a} of polynomials of elements of the form $W_h(x)$ for $x\in\mathbb{R}^{2n}$ with respect to the non-commutative multiplication operation.

It is this C*-algebra (sometimes known as the \emph{CCR algebra}, or the \emph{Weyl form of the CCRs}) that is often used to model the physical magnitudes, or \emph{observables}, of a quantum mechanical system constructed from a classical system with phase space $\mathbb{R}^{2n}$.  One can then take the positive, normalized linear functionals, or \emph{states}, on the C*-algebra of physical magnitudes to model the physically realizable states of the quantum system.\citep{Ha92,BrRo87,BrRo96,Ho97}

However, it is known that many of the states on $\mathcal{W}_h(\mathbb{R}^{2n})$ are pathological, in the sense that they violate continuity conditions some believe to be necessary for physics. \citep{Ar95,Ha01,Ha04,Ru11a}  Among the states one might consider pathological are the \emph{non-regular} states.\footnote{The Weyl algebra does indeed allow for many nonregular states; see \citet{BeMaPeSi74,FaVeWe74,AcMoSt93}.  However, one can still gain some control of the representations of nonregular states.\citep{CaMoSt99}}  A bounded linear functional $\omega$ on $\mathcal{W}_h(\mathbb{R}^{2n})$ is called \emph{regular} just in case for all $x\in\mathbb{R}^{2n}$, the mapping
\[t\in\mathbb{R}\mapsto\omega(W_h(tx))\]
is continuous.  The well known Stone-von Neumann theorem \citep{Pe90, Su99} tells us that if (and only if) a state is regular, its GNS representation is quasiequivalent to the ordinary Schr\"{o}dinger representation on the Hilbert space $L^2(\mathbb{R}^n)$, and hence leads to a reconstruction of the orthodox formalism for quantum mechanics.  In this case, one sees immediately that the restriction to regular states is desirable for applications in physics.

When one moves, however, to the context of quantum field theory, the story so far breaks down.  The Stone-von Neumann theorem is no longer applicable to the phase space of a field-theoretic system, which is infinite-dimensional, and hence, fails to be locally compact.  Instead, infinitely many (even regular) unitarily inequivalent irreducible representations of the Weyl algebra appear, and one has no principled way of using any particular irreducible representation to motivate a restriction to some subset of states.

The goal of this paper, however, is to provide a principled method for restricting the state space of the Weyl algebra by looking at its purely algebraic structure, rather than its Hilbert space representations.  In particular, by understanding the algebra $\mathcal{W}_h(\mathbb{R}^{2n})$ as part of a \emph{strict and continuous deformation quantization} of $AP(\mathbb{R}^{2n})$, one can use purely algebraic tools to show that the regular states are already privileged.  With this quantization, for any state $\omega$ on $\mathcal{W}_h(\mathbb{R}^{2n})$, one can construct the continuous field of states that form a ``constant" section of linear functionals.  We will call the value of this section at $h = 0$ the \emph{classical limit} of the state $\omega$.  The privileged states on $\mathcal{W}_h(\mathbb{R}^{2n})$ for our purposes will be the ones whose classical limits can be used to explain the success of classical physics.  We will understand the classical limit of a quantum state to be useful for explaining the success of classical physics just in case it is a physical state on $AP(\mathbb{R}^{2n})$.

A natural candidate for the collection of physical states of a classical theory with phase space $\mathbb{R}^{2n}$ is the collection of countably additive Borel probability measures on $\mathbb{R}^{2n}$.  But these states form a proper subset of the state space of $AP(\mathbb{R}^{2n})$.  Hence we will think of the countably additive Borel probability measures on $\mathbb{R}^{2n}$ as privileged classical states for our purposes.  To explain the success of classical physics, we only need to find enough quantum states to recover the privileged classical states in the classical limit.  Our main result shows that if one accepts that the countably additive Borel probability measures are the only physical states of the classical theory with phase space $\mathbb{R}^{2n}$, then only the regular states on $\mathcal{W}_h(\mathbb{R}^{2n})$ are needed to explain the success of classical physics through the classical limit.

\section{Classical limits}
\label{sec:limits}

In this section, we make precise the notion of the \emph{classical limit} of a quantum state.  Define the ``classical" Weyl algebra as $\mathcal{W}_0(\mathbb{R}^{2n}):= AP(\mathbb{R}^{2n})$.  Consider the family of C*-algebras $\{\mathcal{W}_h(\mathbb{R}^{2n})\}_{h\in[0,1]}$ with ``quantization" maps, denoted $\mathcal{Q}_h:\mathcal{W}_0(\mathbb{R}^{2n})\rightarrow\mathcal{W}_h(\mathbb{R}^{2n})$, given by the unique linear norm continuous extension of
\[\mathcal{Q}_h(W_0(x)) := W_h(x)\]
for all $x\in\mathbb{R}^{2n}$ and each $h\in[0,1]$.  For each $h\in[0,1]$, the map $\mathcal{Q}_h$ has a norm dense range in $\mathcal{W}_h(\mathbb{R}^{2n})$ and satisfies
\[\mathcal{Q}_h(A^*) = \mathcal{Q}_h(A)^*\]
for all $A\in\mathcal{W}_0(\mathbb{R}^{2n})$.  Furthermore, as can be seen in the realization of $\mathcal{W}_h(\mathbb{R}^{2n})$ as Toeplitz operators, the map $\mathcal{Q}_h$ is positive and norm continuous.\citep{BeCo86}  One should be careful \emph{not} to confuse this map $\mathcal{Q}_h$ with other maps sometimes referred to as ``Weyl Quantization'' (e.g., in \citet{La98b,DuHeSm00}) that fail to be positive because they have a different domain.  Similarly, one should note that the map $\mathcal{Q}_h$ we consider is norm continuous on the domain and codomain considered here even though the map $x\mapsto W_h(x)$ is \emph{not} norm continuous.

Given this quantization map, it is known that:
\begin{enumerate}[(i)]
\item For all ``suitably smooth"\citep{BiHoRi04b}  $A,B\in\mathcal{W}_0(\mathbb{R}^{2n})$,
\[\lim_{h\rightarrow 0}\norm{\frac{i}{h}[\mathcal{Q}_h(A),\mathcal{Q}_h(B)] - \mathcal{Q}_h(\{A,B\})} = 0\]
where $\{\cdot,\cdot\}$ is the standard Poisson bracket on $\mathbb{R}^{2n}$ corresponding to the symplectic form $\sigma$ and $[\cdot,\cdot]$ is the commutator (defined by $[X,Y]:= XY-YX$).
\item For all $A,B\in\mathcal{W}_0(\mathbb{R}^{2n})$,
\[\lim_{h\rightarrow 0}\norm{\mathcal{Q}_h(A)\mathcal{Q}_h(B)-\mathcal{Q}_h(AB)} = 0\]
\item For all $A\in\mathcal{W}_0(\mathbb{R}^{2n})$, the mapping $h\mapsto\norm{\mathcal{Q}_h(A)}$ is continuous. 
\end{enumerate}
Furthermore, for each $h\in[0,1]$, $\mathcal{Q}_h$ is injective and $\mathcal{Q}_h[\mathcal{W}_0(\mathbb{R}^{2n})]$ is closed under the product in $\mathcal{W}_h(\mathbb{R}^{2n})$.  This structure $(\mathcal{W}_h,\mathcal{Q}_h)_{h\in[0,1]}$ is thus a \emph{strict deformation quantization}.\citep{Co92,La98b,La06,La17,Ri94,BiHoRi04b}$^,$\footnote{These strict quantizations are meant to represent non-perturbative or analytic formulations of the classical limit of quantum theories, as opposed to the \emph{formal deformation quantizations} investigated by, e.g., \citet{Wa15}, \citet{Ko03}.}

In fact, this structure gives rise to a \emph{continuous bundle of C*-algebras} $(\{\mathcal{W}_h\}_{h\in[0,1]},\mathcal{K})$ over the base space $[0,1]$.\citep{Di77,La98b,La17}  The C*-algebra of continuous sections $\mathcal{K}$ is given by the unique C*-subalgebra of $\prod_{h\in[0,1]}\mathcal{W}_h(\mathbb{R}^{2n})$ containing the elements of the form
\[h\mapsto\mathcal{Q}_h(A)\]
for all $A\in\mathcal{W}_0(\mathbb{R}^{2n})$.  Defining the global quantization map $\mathcal{Q}:\mathcal{W}_0(\mathbb{R}^{2n})\rightarrow\mathcal{K}$ by
\[\mathcal{Q}(A) := [h\mapsto\mathcal{Q}_h(A)]\]
for all $A\in\mathcal{W}_0(\mathbb{R}^{2n})$, we have that $(\{\mathcal{W}_h(\mathbb{R}^{2n})\}_{h\in[0,1]},\mathcal{K};\mathcal{Q})$ is a \emph{continuous quantization}.\citep{La98b,La17,BiHoRi04b}$^,$\footnote{For the case of a possibly infinite dimensional pre-symplectic phase space, see \citet{BiHoRi04b,HoRi05}.  For extensions to larger algebras see \citet{HoRiSc08}.}

Now we can consider \emph{continuous fields of linear functionals}\citep{La98b} as families $\{\omega_h\}_{h\in[0,1]}$, where $\omega_h\in\mathcal{W}_h(\mathbb{R}^{2n})^*$ for each $h\in[0,1]$,  and for each continuous section $\varphi\in\mathcal{K}$,
\[\lim_{h\rightarrow 0}\omega_h(\varphi(h)) = \omega_0(\varphi(0))\]
In particular, for a fixed $H\in[0,1]$ and $\omega\in\mathcal{W}_H(\mathbb{R}^{2n})^*$, since $\mathcal{Q}_H$ is norm continuous we can define $\hat{\omega}\in\mathcal{W}_0(\mathbb{R}^{2n})^*$ by
\[\hat{\omega} := \omega\circ\mathcal{Q}_H\]
And we then construct the ``constant" continuous field of functionals $\{\omega_h\}_{h\in[0,1]}$ through $\omega_0 = \hat{\omega}$ and $\omega_H = \omega$ by defining $\omega_h:\mathcal{W}_h(\mathbb{R}^{2n})\rightarrow\mathbb{C}$ as the unique norm continuous extension of
\[\omega_h(\mathcal{Q}_h(A)) := \hat{\omega}(A)\]
for all $A\in\mathcal{W}_0(\mathbb{R}^{2n})$.  It is easy to see that $\{\omega_h\}_{h\in[0,1]}$ is indeed a continuous field of functionals, so we will define $\hat{\omega}$ as the \emph{classical limit} of $\omega$.  In particular, since for any $h\in[0,1]$, the quantization map $\mathcal{Q}_h$ is positive, we know that when $\omega\in\mathcal{W}_h(\mathbb{R}^{2n})^*$ is a state, its classical limit $\hat{\omega}$ is a state as well.\footnote{For more on the relationship between the classical and quantum state spaces, see \citet[][Thm. 13]{BeCo86}.}  We will use this notion of the classical limit in what follows.

Before proceeding, a caveat is in order:  the $\hbar\rightarrow 0$ limit described in this section is of course only \emph{one} way, among many others, to make precise the classical limit of quantum theories.  Furthermore, when one uses the $\hbar\rightarrow 0$ limit as outlined above, the classical limit of a state is understood only \emph{relative} to the continuous quantization considered.  Nevertheless, we choose to investigate classical limits defined relative to the above continuous quantization.  The current paper demonstrates this notion of the classical limit is fruitful at least for understanding the significance of regular and non-regular states.

\section{Classical states}
\label{sec:states}

In this section, we look in more detail at the structure of the state space of a classical theory.  This will help us understand the import of Thm. \ref{thm:reduction} for C*-algebras of physical magnitudes, and it will lead to some lemmas required for the proof of the main result.  We focus on a classical theory whose phase space is given by a locally compact, $\sigma$-compact Hausdorff topological space $\mathcal{M}$.  Although in the next section we restrict attention to finite-dimensional symplectic vector spaces, for the moment we make no such restriction.

Prima facie, it's not clear which C*-algebra of functions on $\mathcal{M}$ we should take to model the bounded physical magnitudes of the classical system.  Candidates include:
\begin{itemize}
\item $B_\Sigma(\mathcal{M})$, the algebra of bounded complex-valued functions on $\mathcal{M}$ that are measurable with respect to some $\sigma$-algebra $\Sigma$.
\item $C_b(\mathcal{M})$, the algebra of bounded continuous complex-valued functions on $\mathcal{M}$.
\item $C_0(\mathcal{M})$, the algebra of continuous complex-valued functions on $\mathcal{M}$ vanishing at infinity.
\end{itemize}
Of course, we have the following inclusion relations:
\[C_0(\mathcal{M})\subseteq C_b(\mathcal{M})\subseteq B(\mathcal{M})\]
This list of possible algebras of classical quantities is not exhaustive, but it (along with some additions later on) will suffice for our purposes.  See \citet{We84} for more on alternative classical algebras and their corresponding quantum algebras.

We will have occasion to examine measurable functions with respect to two particular $\sigma$-algebras.  First, let $\Sigma_B(\mathcal{M})$ be the $\sigma$-algebra of Borel sets generated by the topology on $\mathcal{M}$.  We will denote the bounded Borel measurable functions by $B_B(\mathcal{M}) :=B_{\Sigma_B}(\mathcal{M})$.  Second, let $\Sigma_{R}(\mathcal{M})$ be the $\sigma$-algebra of universally Radon measurable sets.\citep{Fr03}  We will denote the bounded measurable functions with respect to $\Sigma_{R}$ by $B_{R}(\mathcal{M}):=B_{\Sigma_{R}}(\mathcal{M})$.  Notice also that since every Borel set is universally Radon measurable, we have $\Sigma_B(\mathcal{M})\subseteq \Sigma_{R}(\mathcal{M})$, so it follows that $B_B(\mathcal{M})\subseteq B_{R}(\mathcal{M})$.

The states on $C_0(\mathcal{M})$ are precisely the countably additive regular Borel probability measures on $\mathcal{M}$.\citep{FeDo88,La98b,La17}  So if we think the physical classical states consist in the countably additive regular Borel probability measures, then this gives us reason to use $C_0(\mathcal{M})$ to model the physical observables of the classical system.  If we choose a larger C*-algebra, then we will in general allow for more states.  However, we know that the bidual of $C_0(\mathcal{M})$ is $B_{R}(\mathcal{M})$.\footnote{See Prop. 437I of \citet{Fr03}.  For an alternative characterization of the bidual of $C_0(\mathcal{M})$, see Thm. 5.12 on p. 76 of \citet{Co90}.}  Further, it follows that every element of $B_{R}(\mathcal{M})$ can be approximated weakly by nets of elements of $C_0(\mathcal{M})$, and the normal state space of $B_{R}(\mathcal{M})$, understood as a W*-algebra is also the collection of countably additive regular Borel probability measures on $\mathcal{M}$.  So we can also use $B_{R}(\mathcal{M})$ to model possibly idealized (even discontinuous) physical magnitudes, where we understand elements of the normal state space of $B_{R}(\mathcal{M})$ to model physical states.\footnote{For more on the physical interpretation of the bidual and weak limits, see \citet{Fe17c}.}  If we desire, we can also restrict attention to $B_B(\mathcal{M})$, which should alter our interpretation only minimally.  For example, we also know that every physical magnitude in $B_B(\mathcal{M})\subseteq B_{R}(\mathcal{M})$ can likewise be approximated weakly by nets of elements of $C_0(\mathcal{M})$.

Notice, however, that in the quantization procedure of \S\ref{sec:limits} we focused on a different algebra, the algebra of almost periodic functions.\footnote{For continuous quantizations that start from the classical algebra $C_0(\mathcal{M})$, see, e.g., the discussion of \emph{Berezin quantization} in \citet{La98b}.}  When $\mathcal{M}$ is a locally compact abelian group, we know that $AP(\mathcal{M})\subseteq C_b(\mathcal{M})$, but $AP(\mathcal{M})$ and $C_0(\mathcal{M})$ are in general not identical.  As such, the state space of $AP(\mathcal{M})$ will in general not be equal to that of $C_0(\mathcal{M})$.\citep{He53,Ru62}  In this section, we will characterize the countably additive Borel measures as a subspace of the bounded linear functionals on $AP(\mathcal{M})$.

In fact, we will work with a more general collection of algebras of functions.  For the remainder of this section, let $\mathfrak{C}\subseteq C_b(\mathcal{M})$ be an abelian C*-subalgebra of the bounded continuous complex-valued functions on $\mathcal{M}$ satisfying the constraints: (i) $\mathfrak{C}$ separates points of $\mathcal{M}$; and (ii) $\mathfrak{C}$ contains the constants.  So $\mathfrak{C}$ may be the almost periodic functions, or it may be another algebra altogether.  We characterize the bounded countably additive regular Borel measures as a subset $V$ of $\mathfrak{C}^*$, and we show that applying Thm. \ref{thm:reduction} with this choice of $V$ transforms $\mathfrak{C}$ into $B(\mathcal{M})$.

To perform this procedure, we need to notice that $\mathfrak{C}$ is both ``too small" and ``too large".  It is ``too small" in the sense that it does not contain discontinuous functions like projections, which are needed for the spectral theory that allows us to interpret physical magnitudes.  It is ``too large" in the sense that it allows for ``states at infinity", which may be considered pathological for physical applications.  Thus, we will enlarge the algebra $\mathfrak{C}$ by completing it in its weak topology to form $\mathfrak{C}^{**}$.  Then we will reduce the algebra by restricting ourselves to a privileged collection of states on $\mathfrak{C}^{**}$.

Let us denote the pure state space of $\mathfrak{C}$ by $\mathscr{P}:= \mathscr{P}(\mathfrak{C})$.  Recall that $\mathfrak{C}$ is *-isomorphic to $C(\mathscr{P})$, the continuous functions on the pure state space of $\mathfrak{C}$ with the weak* topology, which is a compact Hausdorff space.  Furthermore, there is a continuous injection $k: \mathcal{M}\rightarrow \mathscr{P}$, called a \emph{compactification},\footnote{See \citet[(p. 16)][]{Ga69} and \citet[(\S4.4)][]{KaRi97}.  For the case of particular interest for this paper where $\mathfrak{C} = AP(\mathbb{R}^{2n})$, the compactification $\mathscr{P}(AP(\mathbb{R}^{2n}))$ is known as the \emph{Bohr compactification} of $\mathbb{R}^{2n}$; see \citet{AnKa43,He53,Ru62}.} such that for any $f\in\mathfrak{C}$ and its surrogate $\hat{f}\in C(\mathscr{P})$, $\hat{f}\circ k = f$.  The map $k$ sends each point in $\mathcal{M}$ to the pure state it determines; $k$ is defined for all $p\in\mathcal{M}$ by 
\begin{align*}
k(p)(f) := f(p)
\end{align*}
for all $f\in\mathfrak{C}$.  Similarly, all states (even mixed states) can be represented as Borel measures on $\mathscr{P}$.  The Riesz-Markov theorem \citep{ReSi80,Ru87,Fr03,La17} implies that each $\omega\in\mathfrak{C}^*$ corresponds to a unique bounded regular Borel measure $\mu_\omega$ such that\footnote{See also \citet{La91} and \citet{Fe16a} for similar discussions in the context of classical physics.}
\[\omega(f) = \int_{\mathscr{P}}\hat{f}~d\mu_\omega\]
Here, the Borel $\sigma$-algebra on $\mathscr{P}$ is determined by the weak* topology on $\mathscr{P}$.  The weak completion of $\mathfrak{C}$ is then $B_{R}(\mathscr{P})\cong \mathfrak{C}^{**}$, the bounded functions on the pure state space of $\mathfrak{C}$ that are measurable with respect to the $\sigma$-algebra of universally Radon measurable sets.

First, we need to establish that $\mathfrak{C}^{**}$ is indeed ``large enough" to contain all of the discontinuous functions on $\mathcal{M}$ we desire.

\begin{lemma}
\label{lemma: Borel set}
If $S\in\Sigma_B(\mathcal{M})$, then $k[S]\in\Sigma_B(\mathscr{P})$, where $\mathscr{P}$ is treated with the weak* topology.
\end{lemma}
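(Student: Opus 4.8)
The plan is to exploit the $\sigma$-compactness of $\mathcal{M}$ in order to reduce the problem to compact pieces, on which the continuous injection $k$ is forced to be a homeomorphism onto its image. The naive hope---that $k[S]$ is Borel simply because $k$ is continuous---cannot be made to work directly, since continuous images of Borel sets are in general only analytic, not Borel. The key conceptual move that sidesteps this obstacle is to use the \emph{injectivity} of $k$ together with compactness.

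First I would write $\mathcal{M} = \bigcup_{n\in\mathbb{N}} K_n$ as a countable union of compact sets, which is possible because $\mathcal{M}$ is $\sigma$-compact. For each $n$, the restriction $k|_{K_n}\colon K_n \to k[K_n]$ is a continuous bijection from a compact space onto a subspace of the Hausdorff space $\mathscr{P}$, and hence is a homeomorphism onto its image. In particular $k[K_n]$ is compact, therefore closed, and therefore Borel in $\mathscr{P}$.

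Next I would invoke the standard fact that when a subset is itself Borel, its subspace Borel $\sigma$-algebra agrees with the collection of ambient Borel sets contained in it. Concretely: since each $K_n$ is compact, hence closed, hence Borel in $\mathcal{M}$, the set $S\cap K_n$ is Borel in $\mathcal{M}$ and so Borel in the subspace $K_n$. Because $k|_{K_n}$ is a homeomorphism, $k[S\cap K_n]$ is Borel in the subspace $k[K_n]$; and because $k[K_n]$ is itself Borel in $\mathscr{P}$, it follows that $k[S\cap K_n]$ is Borel in $\mathscr{P}$. Finally I would assemble the pieces via $k[S] = \bigcup_n k[S\cap K_n]$, a countable union of Borel subsets of $\mathscr{P}$, which is therefore Borel.

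The two points requiring care are the appeals to the interplay between subspace and ambient Borel structure (valid precisely because the relevant subsets $K_n$ and $k[K_n]$ are Borel) and the compact-to-Hausdorff homeomorphism fact. But the genuinely crucial step---the one that circumvents the general failure for continuous images of Borel sets---is the decomposition into compact sets, where injectivity upgrades continuity to a homeomorphism and thus guarantees preservation of Borel structure. I expect this to be the main obstacle only in the sense of recognizing that it is needed; once the decomposition is in place, the remaining verifications are routine.
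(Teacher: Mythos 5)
Your proof is correct, but it takes a different route from the paper's. The paper first reduces to the case of closed $S$ (on the grounds that the closed sets generate $\Sigma_B(\mathcal{M})$), then writes a closed set as a countable union of compacts $K_n$ using $\sigma$-compactness, and concludes since each $k[K_n]$ is compact, hence closed, hence Borel in the Hausdorff space $\mathscr{P}$. You instead decompose the whole space $\mathcal{M}=\bigcup_n K_n$, upgrade $k|_{K_n}$ to a homeomorphism onto its image via the compact-to-Hausdorff fact, and transport the Borel structure of the trace $S\cap K_n$ piecewise, using that subspace-Borel subsets of an ambient-Borel set are ambient-Borel. What your version buys is that it handles an arbitrary Borel $S$ in one pass: the paper's reduction step is actually the delicate point of its argument, since the class $\{S\subseteq\mathcal{M} : k[S]\in\Sigma_B(\mathscr{P})\}$ is not obviously a $\sigma$-algebra---forward images do not commute with complements in general, and one needs the injectivity of $k$ (so that $k[\mathcal{M}\setminus S]=k[\mathcal{M}]\setminus k[S]$, with $k[\mathcal{M}]$ itself Borel by the same compactness argument) to close that class under complementation; the paper leaves this tacit. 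Your localized homeomorphism argument builds injectivity in from the start and so avoids any appeal to the generation claim, at the modest cost of invoking the trace-$\sigma$-algebra facts explicitly. What the paper's version buys is brevity: granting the reduction to closed sets, the rest is three lines and needs only that continuous images of compacta are compact, never the full homeomorphism fact.
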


\begin{proof}
Since the Borel sets are generated by the closed sets, it suffices to prove the claim for the case where $S$ is a closed set.  So suppose $S\subseteq\mathcal{M}$ is a closed set.  Since $\mathcal{M}$ is $\sigma$-compact, there are countably many compact subsets $K_n\subseteq \mathcal{M}$ such that $S = \bigcup_nK_n$.  Now since $k$ is continuous, for each $n\in\mathbb{N}$, we know $k[K_n]$ is compact and hence closed because $\mathscr{P}$ is Hausdorff.  Thus $k[S] = \bigcup_nk[K_n]$ is a Borel set.
\end{proof}

\begin{lemma}
\label{lemma:R set}
If $S\in\Sigma_R(\mathcal{M})$ then $k[S]\in\Sigma_R(\mathscr{P})$, where $\mathscr{P}$ is treated with the weak* topology.
\end{lemma}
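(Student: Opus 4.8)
The plan is to deduce this from the already-proved Lemma \ref{lemma: Borel set} by pulling Radon measures back along $k$. Since $\mathscr{P}$ is compact, every Radon measure on it is finite, so it suffices to fix an arbitrary finite Radon measure $\nu$ on $\mathscr{P}$ and show that $k[S]$ lies in the domain of its completion; as $\nu$ is arbitrary, this gives $k[S]\in\Sigma_R(\mathscr{P})$. First I would record the geometric input. Because $\mathcal{M}$ is $\sigma$-compact, write $\mathcal{M} = \bigcup_n K_n$ with each $K_n$ compact and, after passing to $\bigcup_{m\le n}K_m$, increasing. Since $k$ is a continuous injection and $\mathscr{P}$ is Hausdorff, each $k|_{K_n}$ is a homeomorphism onto the compact (hence closed) set $k[K_n]$, and $k[\mathcal{M}] = \bigcup_n k[K_n]$ is $\sigma$-compact and Borel.

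Next I would transport $\nu$ to $\mathcal{M}$. Define $\mu(A) := \nu(k[A])$ for Borel $A\subseteq\mathcal{M}$; this is well defined because Lemma \ref{lemma: Borel set} gives $k[A]\in\Sigma_B(\mathscr{P})$, and it is countably additive because injectivity of $k$ sends disjoint unions to disjoint unions. Since $\nu$ is finite, $\mu$ is a finite Borel measure on $\mathcal{M}$.

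The crux is to check that the completion of $\mu$ is a Radon measure, i.e.\ that $\mu$ is inner regular with respect to compact sets. Given a Borel $A$ and $\varepsilon>0$, inner regularity of $\nu$ produces a compact $C\subseteq k[A]$ with $\nu(C) > \mu(A)-\varepsilon$. Now $C = \bigcup_n(C\cap k[K_n])$ is an increasing union of compact sets, so $\nu(C\cap k[K_n])\to\nu(C)$, and for $n$ large the set $L := (k|_{K_n})^{-1}[C\cap k[K_n]]$ is a compact subset of $A$ with $\mu(L) = \nu(C\cap k[K_n]) > \mu(A)-2\varepsilon$. I expect this to be the main obstacle, as it is the one place where both the $\sigma$-compactness of $\mathcal{M}$ and the fact that $k$ restricts to a homeomorphism on compacta are essential; without them a compact subset of $k[A]$ need not descend to a compact subset of $\mathcal{M}$.

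Finally I would close the loop. Since the completion of $\mu$ is Radon and $S\in\Sigma_R(\mathcal{M})$, the set $S$ is $\mu$-measurable, so there are Borel sets $B_1\subseteq S\subseteq B_2$ in $\mathcal{M}$ with $\mu(B_2\setminus B_1)=0$. Applying $k$ and using injectivity yields $k[B_1]\subseteq k[S]\subseteq k[B_2]$ with $k[B_1],k[B_2]\in\Sigma_B(\mathscr{P})$ by Lemma \ref{lemma: Borel set}, and $\nu(k[B_2]\setminus k[B_1]) = \nu(k[B_2\setminus B_1]) = \mu(B_2\setminus B_1) = 0$. Thus $k[S]$ is caught between two Borel sets whose difference is $\nu$-null, hence lies in the domain of the completion of $\nu$. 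As $\nu$ ranges over all Radon measures on $\mathscr{P}$, we conclude $k[S]\in\Sigma_R(\mathscr{P})$.
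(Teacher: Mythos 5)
Your proof is correct, but it reaches the conclusion by a genuinely different route than the paper. The paper stays upstairs on $\mathscr{P}$: it first asserts that, because $k$ is a measurable isomorphism of the Borel $\sigma$-algebras onto its image (via Lemma \ref{lemma: Borel set}), it ``must also be'' a measurable isomorphism of the universally Radon measurable $\sigma$-algebras, giving $k[S]\in\Sigma_R(k[\mathcal{M}])$; then, for an arbitrary Radon measure $\mu$ on $\mathscr{P}$, it restricts $\mu$ to the Borel subspace $k[\mathcal{M}]$ (such restrictions are again Radon), writes $k[S]=U\cap k[\mathcal{M}]$ for some $\mu$-measurable $U\subseteq\mathscr{P}$, and concludes that $k[S]$ is $\mu$-measurable since $k[\mathcal{M}]$ is Borel. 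You instead work downstairs on $\mathcal{M}$: you pull each Radon $\nu$ on $\mathscr{P}$ back to $\mu(A):=\nu(k[A])$, verify inner regularity of $\mu$ by hand (compact $C\subseteq k[A]$, continuity from below along $C\cap k[K_n]$, and descent through the homeomorphisms $k|_{K_n}$), and then push the Borel sandwich $B_1\subseteq S\subseteq B_2$ forward, using injectivity of $k$ to get $\nu(k[B_2]\setminus k[B_1])=\mu(B_2\setminus B_1)=0$ and completeness of $\nu$ to finish. What each buys: the paper's proof is shorter and modular, but its claim that the Borel isomorphism automatically upgrades to a $\Sigma_R$-isomorphism is left unargued, and substantiating it requires essentially the measure-transfer-with-inner-regularity computation you carry out explicitly --- so your version is the more self-contained of the two, and it makes visible exactly where $\sigma$-compactness of $\mathcal{M}$ and the homeomorphism-on-compacta property enter, which in the paper are hidden inside Lemma \ref{lemma: Borel set} and the unproved upgrade. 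Conversely, the paper's restriction argument never needs finiteness of the measure, whereas you invoke compactness of $\mathscr{P}$ to reduce to finite $\nu$; this is harmless, since locally finite measures on a compact space are totally finite. One point worth making explicit in a final write-up: your inner-regularity check is performed for Borel $A$ only, which suffices because every set in the domain of the completion is sandwiched between Borel sets of equal measure, so the completion inherits inner regularity with respect to compacta and is indeed a Radon measure on $\mathcal{M}$, hence measures $S$, and the argument closes as you say.
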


\begin{proof}
By Lemma \ref{lemma: Borel set}, $k$ is a measurable isomorphism between the measurable spaces $(\mathcal{M},\Sigma_B(\mathcal{M}))$ and $(k[\mathcal{M}],\Sigma_B(k[\mathcal{M}])$, and hence $k$ must also be a measurable isomorphism between the measurable spaces $(\mathcal{M},\Sigma_{R}(\mathcal{M}))$ and $(k[\mathcal{M}],\Sigma_{R}(k[\mathcal{M}]))$.  Thus, if $S\in\Sigma_{R}(\mathcal{M})$, we must have $k[S]\in\Sigma_{R}(k[\mathcal{M}])$.  Now it suffices to show that when $k[S]\subseteq k[\mathcal{M}]$ is universally Radon measurable in $k[\mathcal{M}]$, it is also universally Radon measurable in $\mathscr{P}$, i.e. if $k[S]\in\Sigma_{R}(k[\mathcal{M}])$, then $k[S]\in\Sigma_{R}(\mathscr{P})$.

So suppose $k[S]\in\Sigma_{R}(k[\mathcal{M}])$.  Then suppose $\mu$ is a Radon measure on $\mathscr{P}$.  The restriction $\mu_{|k[\mathcal{M}]}$ is a Radon measure on $k[\mathcal{M}]$ and hence $k[S]$ is $\mu_{|k[\mathcal{M}]}$-measurable.  It follows that there is a $\mu$-measurable set $U\subseteq\mathscr{P}$ such that $k[S] = U\cap k[\mathcal{M}]$.  Now, since $k[\mathcal{M}]$ is Borel and hence $\mu$-measurable, it follows that $k[S]$ is $\mu$-measurable.  Therefore, $k[S]\in\Sigma_{R}(\mathscr{P})$.
\end{proof}

\begin{lemma}
\label{lemma:function}
Suppose $f:\mathscr{P}\rightarrow\mathbb{C}$ has support contained in $k[\mathcal{M}]$.  Then

\begin{enumerate}
\item $f\in B_B(\mathscr{P})$ iff $f\circ k\in B_B(\mathcal{M})$.

\item $f\in B_{R}(\mathscr{P})$ iff $f\circ k\in B_{R}(\mathcal{M})$.
\end{enumerate}
\end{lemma}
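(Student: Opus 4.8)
The plan is to reduce everything to the single function $g := f\circ k$ on $\mathcal{M}$ and to exploit that, since $f$ vanishes off $k[\mathcal{M}]$, $f$ is completely determined by $g$. The organising observation is the decomposition of level sets: for any Borel $E\subseteq\mathbb{C}$ one has $f^{-1}(E)\cap k[\mathcal{M}] = k[g^{-1}(E)]$, and since $f$ vanishes off $k[\mathcal{M}]$,
\[
f^{-1}(E) = \begin{cases} k[g^{-1}(E)], & 0\notin E,\\ k[g^{-1}(E)]\cup(\mathscr{P}\setminus k[\mathcal{M}]), & 0\in E.\end{cases}
\]
I would first record two facts used throughout. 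The set $k[\mathcal{M}]$ is Borel in $\mathscr{P}$: writing $\mathcal{M}=\bigcup_n K_n$ with $K_n$ compact ($\sigma$-compactness), each $k[K_n]$ is compact, hence closed in the Hausdorff space $\mathscr{P}$, so $k[\mathcal{M}]$ is an $F_\sigma$; in particular $\mathscr{P}\setminus k[\mathcal{M}]$ lies in both $\Sigma_B(\mathscr{P})$ and $\Sigma_R(\mathscr{P})$. And $f$ is bounded iff $g$ is, since $\sup_{\mathscr{P}}\abs{f}=\sup_{\mathcal{M}}\abs{g}$.

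For the direction $g\in B_B(\mathcal{M})\Rightarrow f\in B_B(\mathscr{P})$ (and verbatim for $B_R$), I would fix Borel $E\subseteq\mathbb{C}$, note $g^{-1}(E)\in\Sigma_B(\mathcal{M})$ (resp.\ $\Sigma_R(\mathcal{M})$), apply Lemma~\ref{lemma: Borel set} (resp.\ Lemma~\ref{lemma:R set}) to get $k[g^{-1}(E)]\in\Sigma_B(\mathscr{P})$ (resp.\ $\Sigma_R(\mathscr{P})$), and then read off $f^{-1}(E)\in\Sigma_B(\mathscr{P})$ (resp.\ $\Sigma_R(\mathscr{P})$) from the displayed decomposition, the extra piece $\mathscr{P}\setminus k[\mathcal{M}]$ being measurable by the first recorded fact. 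With boundedness this yields $f\in B_B(\mathscr{P})$ (resp.\ $B_R(\mathscr{P})$). The forward direction of part~1 is immediate: $k$ is continuous and $f$ is Borel, so $g=f\circ k$ is Borel, because $k^{-1}$ carries $\Sigma_B(\mathscr{P})$ into $\Sigma_B(\mathcal{M})$.

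The only genuinely new content is the forward direction of part~2: from $f\in B_R(\mathscr{P})$ I must deduce $g=f\circ k\in B_R(\mathcal{M})$, i.e.\ that $k^{-1}$ carries $\Sigma_R(\mathscr{P})$ into $\Sigma_R(\mathcal{M})$. This is the main obstacle, since universal Radon measurability---unlike Borel measurability---is not automatically preserved under composition with a continuous map, so the cheap argument of part~1 is unavailable. I would establish the set-level claim: if $A\in\Sigma_R(\mathscr{P})$ then $k^{-1}(A)\in\Sigma_R(\mathcal{M})$. Fix a Radon measure $\nu$ on $\mathcal{M}$; by $\sigma$-compactness $\nu$ is $\sigma$-finite, so it suffices to show $k^{-1}(A)\cap K_n$ is $\nu$-measurable for each $n$. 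Pushing the finite Radon measure $\nu(\,\cdot\,\cap K_n)$ forward along $k$ yields a finite Borel measure $\mu_n$ on the compact Hausdorff space $\mathscr{P}$ that is again Radon (inner regularity transfers because continuous images of compact sets are compact, exactly as in Lemma~\ref{lemma: Borel set}). Since $A$ is $\mu_n$-measurable, I sandwich it between Borel sets $B_1\subseteq A\subseteq B_2$ with $\mu_n(B_2\setminus B_1)=0$; applying $k^{-1}$ (which sends Borel sets to Borel sets and, by the pushforward identity, $\mu_n$-null sets to sets null for $\nu(\,\cdot\,\cap K_n)$) sandwiches $k^{-1}(A)\cap K_n$ between $\nu$-equivalent Borel sets, giving $\nu$-measurability. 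Taking the union over $n$ yields $k^{-1}(A)\in\Sigma_R(\mathcal{M})$. Applying this with $A=f^{-1}(E)$ and invoking boundedness completes part~2. Alternatively, one may cite the measurable-isomorphism claim already asserted in Lemma~\ref{lemma:R set}: the trace $A\cap k[\mathcal{M}]$ lies in $\Sigma_R(k[\mathcal{M}])$, and measurability of $k\colon(\mathcal{M},\Sigma_R(\mathcal{M}))\to(k[\mathcal{M}],\Sigma_R(k[\mathcal{M}]))$ gives $k^{-1}(A)=k^{-1}(A\cap k[\mathcal{M}])\in\Sigma_R(\mathcal{M})$ directly.
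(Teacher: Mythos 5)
Your proposal is correct, and for three of the four implications it coincides with the paper's proof: the same level-set decomposition $f^{-1}(E)=k[g^{-1}(E)]$ or $k[g^{-1}(E)]\cup(\mathscr{P}\setminus k[\mathcal{M}])$ according as $0\notin E$ or $0\in E$, the same $F_\sigma$ argument for measurability of $k[\mathcal{M}]$, Lemma \ref{lemma: Borel set} (resp.\ Lemma \ref{lemma:R set}) for the two backward directions, and continuity of $k$ for the forward direction of part 1. The one place you genuinely diverge is the forward direction of part 2. The paper handles it by intersecting $f^{-1}[S]$ with the Borel (hence universally Radon measurable) set $k[\mathcal{M}]$ to land in the trace $\sigma$-algebra $\Sigma_R(k[\mathcal{M}])$, and then simply citing the claim from the proof of Lemma \ref{lemma:R set} that $k$ is a measurable isomorphism between $(\mathcal{M},\Sigma_R(\mathcal{M}))$ and $(k[\mathcal{M}],\Sigma_R(k[\mathcal{M}]))$ --- the route you mention as an alternative in your last sentence. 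You instead prove from scratch that $k^{-1}$ carries $\Sigma_R(\mathscr{P})$ into $\Sigma_R(\mathcal{M})$: fix a Radon $\nu$ on $\mathcal{M}$, localize to the compacts $K_n$, push $\nu(\,\cdot\,\cap K_n)$ forward to a finite Radon measure on $\mathscr{P}$ (inner regularity transferring because $k$ maps compacts to compacts), and sandwich $A$ between Borel sets with null difference. This argument is sound, and it buys something the paper's text does not supply: the ``hence'' in Lemma \ref{lemma:R set}, by which the paper promotes the Borel isomorphism to a $\Sigma_R$-isomorphism, is asserted without proof there, and your pushforward argument is essentially the missing justification of its preimage half (the key hidden fact being that $k$ restricted to each $K_n$ is a homeomorphism onto its image, which is what makes inner regularity and nullity transfer both ways). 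So your route is longer but self-contained, and it makes explicit where $\sigma$-compactness of $\mathcal{M}$ and compactness of $\mathscr{P}$ actually enter; the paper's route is shorter by reusing Lemma \ref{lemma:R set} as a black box.
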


\begin{proof}
\begin{enumerate}
\item ($\Rightarrow$) Suppose $f$ is Borel measurable on $\mathscr{P}$.  Then clearly $f\circ k$ is Borel measurable because $k$ is continuous.

($\Leftarrow$) Suppose $f\circ k$ is Borel measurable on $\mathcal{M}$.  Let $S$ be a measurable set in $\mathbb{C}$.  Then $(f\circ k)^{-1}[S]$ is Borel measurable in $\mathcal{M}$.  Let $k[\mathcal{M}]^C$ denote $\mathscr{P}\setminus k[\mathcal{M}]$.  Since $k$ is an injection and $\text{supp }f\subseteq k[\mathcal{M}]$, we know that $f^{-1}[S] = k[(f\circ k)^{-1}[S]]\cup k[\mathcal{M}]^C$ if $0\in S$.  On the other hand, if $0\notin S$, then  $f^{-1}[S] = k[(f\circ k)^{-1}[S]]$.  The set $k[(f\circ k)^{-1}[S]]$ is Borel measurable by Lemma \ref{lemma: Borel set}, and $k[\mathcal{M}]^C$ is Borel measurable because $k[\mathcal{M}]$ is measurable by Lemma \ref{lemma: Borel set}.  Hence $f^{-1}[S]$ is Borel measurable.

\item ($\Rightarrow$) Suppose $f\in B_{R}(\mathscr{P})$.  Let $S$ be a measurable set in $\mathbb{C}$.  Then $f^{-1}[S]\in \Sigma_{R}(\mathscr{P})$, so $f^{-1}[S]\cap k[\mathcal{M}]\in\Sigma_{R}(k[\mathcal{M}])$ since $k[\mathcal{M}]\in\Sigma_B(\mathscr{P})\subseteq\Sigma_{R}(\mathscr{P})$.  It follows that $(f\circ k)^{-1}[S]\in \Sigma_{R}(\mathcal{M})$ since $k$ is an isomorphism between the measurable spaces $(\mathcal{M},\Sigma_{R}(\mathcal{M}))$ and $(k[\mathcal{M}],\Sigma_{R}(k[\mathcal{M}]))$.  Hence $f\circ k\in B_{R}(\mathcal{M})$.

($\Leftarrow$)  Suppose $f\circ k\in B_{R}(\mathcal{M})$.  Again, let $S$ be a Borel set in $\mathbb{C}$.  Then we can use the same reasoning as in the proof of (1) but applying Lemma \ref{lemma:R set} in place of Lemma \ref{lemma: Borel set}.  We find that Lemma \ref{lemma:R set} yields $f^{-1}[S]\in \Sigma_{R}(\mathscr{P})$.
\end{enumerate}
\end{proof}

Now we know both that $B_B(\mathscr{P})$ contains surrogates for all elements of $B_B(\mathcal{M})$ and that $\mathfrak{C}^{**}\cong B_{R}(\mathscr{P})$ contains surrogates for all elements of $B_{R}(\mathcal{M})$.  Hence, we will see that these algebras are ``large enough" that we can apply Thm. \ref{thm:reduction} to them and in doing so obtain the algebras $B_B(\mathcal{M})$ and $B_{R}(\mathcal{M})$, respectively, by restricting attention to an appropriate state space.  We define
\[V^C_0 := \{\omega\in\mathfrak{C}^*~|~\text{$\mu_\omega(\mathscr{P}\setminus k[\mathcal{M}]) = 0$}\}\]
where $\mu_\omega$ is the measure on $\mathscr{P}$ determined by $\omega$ through the Riesz-Markov theorem.  Furthermore, we define $V^C$ to be the weak* closure of $V^C_0$ in $\mathfrak{C}^{***}$.  It follows that $V^C$ satisfies conditions (i) and (ii) of Thm. \ref{thm:reduction}.  We show that $V^C$ is precisely the collection of functionals that reduce the algebra $\mathfrak{C}^{**}\cong B_{R}(\mathscr{P})$ to $B_{R}(\mathcal{M})$.

\begin{prop}
\label{prop:classreduction}
The C*-algebra $\mathfrak{B} = \mathfrak{C}^{**}/\mathcal{N}(V^C)$ of Thm. \ref{thm:reduction} is *-isomorphic to $B_{R}(\mathcal{M})$.
\end{prop}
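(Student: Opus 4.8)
The plan is to compute the annihilator ideal $\mathcal{N}(V^C)$ explicitly and then realize the quotient $\mathfrak{C}^{**}/\mathcal{N}(V^C)$ as the algebra of restrictions of elements of $B_{R}(\mathscr{P})$ to $k[\mathcal{M}]$, which Lemma \ref{lemma:function} identifies with $B_{R}(\mathcal{M})$. I take for granted, as asserted in the text preceding the Proposition, that $V^C$ satisfies conditions (i) and (ii), so that Thm.~\ref{thm:reduction} applies and $\mathfrak{B}$ is (up to *-isomorphism) the quotient $\mathfrak{C}^{**}/\mathcal{N}(V^C)$. Throughout I use the identification $\mathfrak{C}^{**}\cong B_{R}(\mathscr{P})$ under which an element $A\in B_{R}(\mathscr{P})$ pairs with $\omega\in\mathfrak{C}^*$ by $\langle A,\omega\rangle = \int_{\mathscr{P}}A~d\mu_\omega$, where $\mu_\omega$ is the measure supplied by the Riesz-Markov theorem.

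First I would reduce to $V_0^C$. Since for each fixed $A$ the map $\omega\mapsto\langle A,\omega\rangle$ is weak* continuous, the annihilator of a set coincides with the annihilator of its weak* closure, so $\mathcal{N}(V^C) = \mathcal{N}(V_0^C)$. Next I would show
\[\mathcal{N}(V_0^C) = \{A\in B_{R}(\mathscr{P})~|~A\circ k = 0\},\]
the ideal of functions vanishing on $k[\mathcal{M}]$. For $\subseteq$, note that for each $p\in\mathcal{M}$ the pure state $k(p)$ corresponds under Riesz-Markov to the point mass $\delta_{k(p)}$, which is supported on $k[\mathcal{M}]$ and hence lies in $V_0^C$; pairing $A$ against it gives $A(k(p)) = 0$ for all $p$. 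For the reverse inclusion, if $A$ vanishes on $k[\mathcal{M}]$ then $\int_{\mathscr{P}}A~d\mu_\omega = \int_{k[\mathcal{M}]}A~d\mu_\omega = 0$ for every $\omega\in V_0^C$, since each such $\mu_\omega$ assigns zero measure to $\mathscr{P}\setminus k[\mathcal{M}]$.

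With the annihilator identified, I would define $\Phi:\mathfrak{C}^{**}/\mathcal{N}(V^C)\rightarrow B_{R}(\mathcal{M})$ by $\Phi([A]) := A\circ k$. This is well defined precisely because $\mathcal{N}(V^C)$ is the kernel of $A\mapsto A\circ k$, which also makes $\Phi$ injective, and it is evidently a *-homomorphism since the operations on $B_{R}(\mathscr{P})$ and $B_{R}(\mathcal{M})$ are pointwise. The two nontrivial points are that $\Phi$ takes values in $B_{R}(\mathcal{M})$ and is surjective, and both are exactly what Lemma \ref{lemma:function} delivers. For the range: given $A\in B_{R}(\mathscr{P})$, the function $A\chi_{k[\mathcal{M}]}$ has support in $k[\mathcal{M}]$ and lies in $B_{R}(\mathscr{P})$ because $k[\mathcal{M}]\in\Sigma_B(\mathscr{P})\subseteq\Sigma_{R}(\mathscr{P})$ by Lemma \ref{lemma: Borel set} (applied to $S=\mathcal{M}$); since $(A\chi_{k[\mathcal{M}]})\circ k = A\circ k$, part (2) of Lemma \ref{lemma:function} gives $A\circ k\in B_{R}(\mathcal{M})$. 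For surjectivity: given $g\in B_{R}(\mathcal{M})$, set $\tilde{g}(q) := g(k^{-1}(q))$ for $q\in k[\mathcal{M}]$ and $\tilde{g}(q):=0$ otherwise; then $\tilde{g}$ has support in $k[\mathcal{M}]$ with $\tilde{g}\circ k = g$, so Lemma \ref{lemma:function}(2) yields $\tilde{g}\in B_{R}(\mathscr{P})$ and $\Phi([\tilde{g}]) = g$.

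Thus $\Phi$ is a *-isomorphism and $\mathfrak{B}\cong B_{R}(\mathcal{M})$. The main obstacle is pinning down $\mathcal{N}(V^C)$: everything downstream is bookkeeping once one knows the annihilator is the vanishing ideal, but that step requires both the reduction $\mathcal{N}(V^C)=\mathcal{N}(V_0^C)$ and the availability of enough point masses inside $V_0^C$ to force \emph{pointwise} vanishing on $k[\mathcal{M}]$, rather than merely almost-everywhere vanishing against a single measure. The measurability subtleties in transferring between $\mathscr{P}$ and $\mathcal{M}$ are isolated entirely in Lemmas \ref{lemma: Borel set}--\ref{lemma:function}, which I invoke as black boxes.
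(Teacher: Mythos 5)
Your proof is correct and follows essentially the same route as the paper's: both realize the quotient via the map $f+\mathcal{N}(V^C)\mapsto f\circ k$, with injectivity resting on the identification of $\mathcal{N}(V^C)$ with the ideal of functions vanishing on $k[\mathcal{M}]$, and surjectivity obtained by extending $g\in B_{R}(\mathcal{M})$ by zero and invoking Lemma \ref{lemma:function}(2). The only difference is that you explicitly justify the annihilator identification (weak*-continuity of evaluation to reduce $\mathcal{N}(V^C)$ to $\mathcal{N}(V_0^C)$, then the point masses $\delta_{k(p)}\in V_0^C$ to force pointwise vanishing), a step the paper merely asserts with a ``Notice that''.
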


\begin{proof}
Notice that for any functions $f,g\in \mathfrak{C}^{**}$,
\begin{align*}
f+\mathcal{N}(V^C) = g + \mathcal{N}(V^C) && \Longleftrightarrow && f(x) = g(x) \text{ for all $x\in k[\mathcal{M}]$}
\end{align*}
Hence, the map
\[f + \mathcal{N}(V^C)\mapsto f\circ k\]
is an injective *-homomorphism of $\mathfrak{C}^{**}/\mathcal{N}(V^C)$ into $B_{R}(\mathcal{M})$.  Furthermore, this mapping is surjective because for each $h\in B_{R}(\mathcal{M})$, we have $h = f\circ k$ for the function $f:\mathscr{P}\rightarrow\mathbb{C}$ defined by
\begin{align*}
f(x) = \begin{cases}
h(k^{-1}(x)) & \text{ if $x\in k[\mathcal{M}]$}\\
0 & \text{ otherwise}
\end{cases}
\end{align*}
and moreover this function $f$ is in $B_{R}(\mathscr{P})$ by (2) of Lemma \ref{lemma:function}.  It follows that the mapping $f+\mathcal{N}(V^C)\mapsto f\circ k$ provides the desired *-isomorphism from $\mathfrak{C}^{**}/\mathcal{N}(V^C)$ to $B_{R}(\mathcal{M})$.
\end{proof}

\begin{cor}
The C*-algebra $\mathfrak{B} = B_{B}(\mathscr{P})/\mathcal{N}(V^C)$ of Thm. \ref{thm:reduction} is *-isomorphic to $B_B(\mathcal{M})$.
\end{cor}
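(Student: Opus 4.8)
The plan is to transcribe the proof of Proposition \ref{prop:classreduction} almost verbatim, applying Thm. \ref{thm:reduction} with $\mathfrak{A} = B_B(\mathscr{P})$ in place of $\mathfrak{C}^{**} \cong B_{R}(\mathscr{P})$, and invoking part (1) of Lemma \ref{lemma:function} wherever that proof used part (2). First I would confirm that the hypotheses of the reduction theorem transfer to the smaller algebra: since $B_B(\mathscr{P}) \subseteq B_{R}(\mathscr{P}) \cong \mathfrak{C}^{**}$, each functional in $V^C$ restricts to a bounded functional on $B_B(\mathscr{P})$ (the elements of $V^C_0$ are just integration against Borel measures $\mu_\omega$ concentrated on $k[\mathcal{M}]$, which makes equal sense on the smaller algebra), and I would check that conditions (i) and (ii) of Thm. \ref{thm:reduction} persist for this restricted $V^C$ exactly as they did on $\mathfrak{C}^{**}$.

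The key step is to compute the annihilator $\mathcal{N}(V^C)$ inside $B_B(\mathscr{P})$ and obtain the equivalence, for $f,g \in B_B(\mathscr{P})$, that $f + \mathcal{N}(V^C) = g + \mathcal{N}(V^C)$ iff $f = g$ on $k[\mathcal{M}]$. Here I would use that for each $p \in \mathcal{M}$ the point mass $\delta_{k(p)}$ is a Borel measure concentrated on $k[\mathcal{M}]$, hence lies in $V^C$, and acts by $\delta_{k(p)}(f) = f(k(p))$; so any $f \in \mathcal{N}(V^C)$ must vanish on all of $k[\mathcal{M}]$, while conversely any $f$ vanishing on $k[\mathcal{M}]$ is integrated to zero by every measure concentrated there. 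This gives $\mathcal{N}(V^C) = \{f \in B_B(\mathscr{P}) : f|_{k[\mathcal{M}]} = 0\}$ and the stated equivalence. With the equivalence in hand, the map $f + \mathcal{N}(V^C) \mapsto f \circ k$ is well-defined and injective into $B_B(\mathcal{M})$; it lands in $B_B(\mathcal{M})$ and is a $*$-homomorphism because $k$ is continuous (the forward direction of Lemma \ref{lemma:function}(1), which uses only continuity of $k$). For surjectivity I would take $h \in B_B(\mathcal{M})$, define $f$ to equal $h \circ k^{-1}$ on $k[\mathcal{M}]$ and $0$ elsewhere, observe $\supp f \subseteq k[\mathcal{M}]$ and $f \circ k = h \in B_B(\mathcal{M})$, and apply the reverse direction of Lemma \ref{lemma:function}(1) to conclude $f \in B_B(\mathscr{P})$, whence $f + \mathcal{N}(V^C) \mapsto h$.

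I expect the main obstacle to be not the structural argument—which merely repeats Proposition \ref{prop:classreduction}—but the bookkeeping that legitimizes substituting part (1) of Lemma \ref{lemma:function} for part (2). Concretely, the extension-by-zero $f$ in the surjectivity step must genuinely be a \emph{Borel} function on $\mathscr{P}$, and this requires that $k[\mathcal{M}]$ itself be Borel in $\mathscr{P}$; this is supplied by Lemma \ref{lemma: Borel set} applied to $S = \mathcal{M}$, writing $\mathcal{M} = \bigcup_n K_n$ so that $k[\mathcal{M}] = \bigcup_n k[K_n]$ is an $F_\sigma$ and its complement $k[\mathcal{M}]^C$ is Borel as well. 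Once this measurability point is settled and the transfer of conditions (i)--(ii) to $B_B(\mathscr{P})$ is verified, the remainder is a verbatim repetition of the proposition with $B_B$ in place of $B_{R}$.
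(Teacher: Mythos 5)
Your proposal is correct and takes essentially the same approach as the paper: the paper's entire proof of the corollary is the remark that it ``follows immediately from (1) of Lemma \ref{lemma:function},'' i.e., rerun the proof of Prop.~\ref{prop:classreduction} with $B_B$ in place of $B_R$, which is precisely what you do. Your extra bookkeeping---computing $\mathcal{N}(V^C)$ inside $B_B(\mathscr{P})$ via the point masses $\delta_{k(p)}$, and noting that $k[\mathcal{M}]$ is Borel by Lemma \ref{lemma: Borel set} so the extension-by-zero is legitimate---merely fills in details the paper leaves implicit.
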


\begin{proof}
This follows immediately from (1) of Lemma \ref{lemma:function}.
\end{proof}

\noindent Thus,
we can use $V^C$ to reduce the state space of an algebra of continuous bounded functions on $\mathcal{M}$ and in doing so recover an algebra of bounded measurable functions on $\mathcal{M}$.  In particular, when we use the algebra $\mathfrak{C} = AP(\mathbb{R}^{2n})$, we can reduce its state space and recover an algebra of bounded measurable functions on $\mathbb{R}^{2n}$.

The vector space $V_0^C$ has an intuitive physical significance.  We now show that $V_0^C$ can indeed be characterized as the collection of bounded countably additive regular Borel measures on $\mathcal{M}$, which we will use in what follows.  Notice that each bounded countably additive regular measure $\mu$ on $\mathcal{M}$ defines a bounded linear functional $\omega_\mu\in\mathfrak{C}^*$ by
\[\omega_\mu(f):=\int_{\mathcal{M}} f~d\mu\]
Let $M(\mathcal{M})$ be the collection of such linear functionals on $\mathfrak{C}$ determined by a bounded countably additive regular Borel measure.
\begin{prop}
\label{prop: ptws cts}
Let $\omega\in\mathfrak{C}^*$.  Then $\omega\in V_0^C$ iff $\omega\in M(\mathcal{M})$.
\end{prop}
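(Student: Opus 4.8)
The plan is to build an explicit bijection, implemented by pushing forward and pulling back along $k$, between the bounded regular Borel measures on $\mathscr{P}$ that are concentrated on $k[\mathcal{M}]$ and the bounded countably additive regular Borel measures on $\mathcal{M}$, and then to read off the equivalence from the uniqueness clause of the Riesz--Markov theorem. The one fact I would lean on repeatedly is that, although $k$ need not be a homeomorphism onto its image, its restriction $k|_K$ to any compact $K\subseteq\mathcal{M}$ \emph{is} a homeomorphism onto $k[K]$, being a continuous injection from a compact space into the Hausdorff space $\mathscr{P}$. Combined with $\sigma$-compactness of $\mathcal{M}$ and Lemma~\ref{lemma: Borel set} (which gives $k[S]\in\Sigma_B(\mathscr{P})$ for $S\in\Sigma_B(\mathcal{M})$, and in particular that $k[\mathcal{M}]=\bigcup_n k[K_n]$ is Borel), this is what lets me move measures across $k$ while controlling regularity.

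For $\omega\in M(\mathcal{M})\Rightarrow\omega\in V_0^C$, I would start from $\omega=\omega_\mu$ for a bounded countably additive regular Borel measure $\mu$ on $\mathcal{M}$ and form the pushforward $\tilde\mu(E):=\mu(k^{-1}[E])$ on $\mathscr{P}$, which is well defined and countably additive since $k$ is continuous. Change of variables gives $\int_{\mathscr{P}}\hat f\,d\tilde\mu=\int_{\mathcal{M}}(\hat f\circ k)\,d\mu=\int_{\mathcal{M}}f\,d\mu=\omega(f)$ for all $f\in\mathfrak{C}$, using $\hat f\circ k=f$. The substantive point is that $\tilde\mu$ is regular: given Borel $E\subseteq\mathscr{P}$, inner regularity of $\mu$ supplies a compact $K\subseteq k^{-1}[E]$ with $\mu(K)$ close to $\mu(k^{-1}[E])$, and then $k[K]$ is a compact subset of $E$ with $\tilde\mu(k[K])=\mu(K)$ by injectivity of $k$, so $\tilde\mu$ is inner regular; for a finite measure on the Hausdorff space $\mathscr{P}$, outer regularity follows from inner regularity by complementation. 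By Riesz--Markov uniqueness, $\mu_\omega=\tilde\mu$, whence $\mu_\omega(\mathscr{P}\setminus k[\mathcal{M}])=\mu(k^{-1}[\mathscr{P}\setminus k[\mathcal{M}]])=\mu(\emptyset)=0$, i.e. $\omega\in V_0^C$.

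For the converse I would assume $\mu_\omega(\mathscr{P}\setminus k[\mathcal{M}])=0$ and define $\mu(A):=\mu_\omega(k[A])$ on $\mathcal{M}$, which makes sense because $k[A]\in\Sigma_B(\mathscr{P})$ by Lemma~\ref{lemma: Borel set}; injectivity of $k$ makes $\mu$ countably additive and boundedness is inherited from $\mu_\omega$. Since $\mu_\omega$ is concentrated on $k[\mathcal{M}]$, change of variables yields $\omega(f)=\int_{\mathscr{P}}\hat f\,d\mu_\omega=\int_{k[\mathcal{M}]}\hat f\,d\mu_\omega=\int_{\mathcal{M}}f\,d\mu=\omega_\mu(f)$, so $\omega=\omega_\mu$ and it remains only to check that $\mu$ is regular. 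Here I would use $\sigma$-compactness: writing $\mathcal{M}=\bigcup_n K_n$ with $K_n$ compact and increasing, the restriction of $\mu_\omega$ to the compact set $k[K_n]$ is regular, and pulling it back along the homeomorphism $k|_{K_n}$ shows $\mu|_{K_n}$ is regular on $K_n$; then for Borel $A$ and $\varepsilon>0$, finiteness of $\mu$ lets me pick $N$ with $\mu(A\cap K_N)$ within $\varepsilon$ of $\mu(A)$ and extract a compact $K\subseteq A\cap K_N$ from inner regularity on $K_N$, giving inner regularity of $\mu$ and hence (as before) outer regularity.

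The genuine obstacle is the regularity transfer, not the algebra. Because $\mathscr{P}$ (the Bohr compactification, in the case of interest) is typically non-metrizable, regularity of Borel measures is neither automatic nor preserved under arbitrary continuous maps, and $k^{-1}$ is not continuous, so one cannot simply pull back. The argument hinges on localizing to compact pieces, where $k$ becomes a homeomorphism, and reassembling via $\sigma$-compactness and finiteness; keeping the inner/outer regularity bookkeeping straight across these pieces is the delicate part. Finally, for a general $\omega\in\mathfrak{C}^*$ rather than a state I would run the same argument on the Jordan decomposition of $\mu_\omega$, reading the defining condition of $V_0^C$ on the total variation $\abs{\mu_\omega}$.
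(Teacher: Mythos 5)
Your proof is correct and takes essentially the same route as the paper, whose entire proof is the one-line observation (from Lemma~\ref{lemma: Borel set}) that $k$ is a measurable isomorphism of $(\mathcal{M},\Sigma_B(\mathcal{M}))$ onto $(k[\mathcal{M}],\Sigma_B(k[\mathcal{M}]))$, across which measures are transferred. Your write-up just makes explicit what the paper's ``follows immediately'' suppresses—the pushforward/pullback constructions, the Riesz--Markov uniqueness step, and above all the regularity transfer via $\sigma$-compactness and the homeomorphisms $k|_K$ on compacta—so it is a faithful, fully detailed rendering of the same argument.
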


\begin{proof}
This follows immediately from the fact that $k$ is a measurable isomorphism between the measurable spaces $(\mathcal{M},\Sigma_B(\mathcal{M}))$ and $(k[\mathcal{M}],\Sigma_B(k[\mathcal{M}]))$, as shown in Lemma \ref{lemma: Borel set}.
\end{proof}

\noindent This establishes that the elements of $V_0^C$ are precisely the bounded countably additive regular Borel measures on $\mathcal{M}$.  In particular, when $\mathfrak{C} = AP(\mathbb{R}^{2n})$, this shows that $V_0^C$ is the collection of bounded countably additive regular Borel measures on $\mathbb{R}^{2n}$.  We will use this fact to prove our main result.

\section{Main result}
\label{sec:result}

Now we restrict attention to the finite-dimensional phase space $\mathcal{M} = \mathbb{R}^{2n}$.  We know from previous results that if we consider the Weyl algebra over $\mathbb{R}^{2n}$ and reduce its state space to the regular states by applying Thm. \ref{thm:reduction}, then we are left with the algebra $\mathscr{B}(\mathscr{H})$ of all bounded linear operators on a separable Hilbert space $\mathscr{H}$.  More precisely, we define
\[V_0^Q := \{\omega\in\mathcal{W}_h(\mathbb{R}^{2n})^*|\text{ } \omega\text{ is regular}\}\]
Further, we define $V^Q$ to be the weak* closure of $V_0^Q$ in $\mathcal{W}_h(\mathbb{R}^{2n})^{***}$.  It follows that $V^Q$ satisfies conditions (i) and (ii) of Thm. \ref{thm:reduction}.  To proceed, we first need to enlarge the algebra $\mathcal{W}_h(\mathbb{R}^{2n})$ to $\mathcal{W}_h(\mathbb{R}^{2n})^{**}$ so that it is ``large enough'', as we did for $\mathfrak{C}$ in \S\ref{sec:states}.  Then we reduce its state space to yield our desired algebra $\mathscr{B}(\mathscr{H})$.
\begin{prop}[\citet{Fe17a}]
\label{prop:weylreduction}
The C*-algebra $\mathfrak{B} = \mathcal{W}_h(\mathbb{R}^{2n})^{**}/\mathcal{N}(V^Q)$ (for $h\neq 0$) of Thm. \ref{thm:reduction} is *-isomorphic to $\mathscr{B}(\mathscr{H})$, where $\mathscr{H}$ is a separable Hilbert space.
\end{prop}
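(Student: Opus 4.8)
The plan is to realize this reduction through the Schr\"{o}dinger representation and to use the Stone--von Neumann theorem to identify the regular functionals with the normal functionals of that representation; the hypothesis $h \neq 0$ enters precisely because it makes the commutation relations nontrivial, so that Stone--von Neumann applies. Write $\pi_S : \mathcal{W}_h(\mathbb{R}^{2n}) \to \mathscr{B}(\mathscr{H})$ for the Schr\"{o}dinger representation on $\mathscr{H} = L^2(\mathbb{R}^n)$, which is irreducible, so that $\pi_S[\mathcal{W}_h(\mathbb{R}^{2n})]'' = \mathscr{B}(\mathscr{H})$. Every representation of a C*-algebra extends uniquely to a normal (i.e. $\sigma$-weakly continuous) representation of its bidual; let $\tilde{\pi}_S : \mathcal{W}_h(\mathbb{R}^{2n})^{**} \to \mathscr{B}(\mathscr{H})$ denote this normal extension. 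Its image is the $\sigma$-weak closure of $\pi_S[\mathcal{W}_h(\mathbb{R}^{2n})]$, namely all of $\mathscr{B}(\mathscr{H})$, so $\tilde{\pi}_S$ is surjective, and its kernel $\mathcal{J} := \ker\tilde{\pi}_S$ is a weak* closed two-sided ideal of $\mathcal{W}_h(\mathbb{R}^{2n})^{**}$.

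The crux is to show $\mathcal{J} = \mathcal{N}(V^Q)$; granting this, the first isomorphism theorem for C*-algebras gives $\mathcal{W}_h(\mathbb{R}^{2n})^{**}/\mathcal{N}(V^Q) = \mathcal{W}_h(\mathbb{R}^{2n})^{**}/\mathcal{J} \cong \mathscr{B}(\mathscr{H})$. First I would note that, because each evaluation $A \mapsto \phi(A)$ is weak* continuous on the bidual, the annihilator of $V^Q$ coincides with the annihilator of the dense subspace $V_0^Q$, so it suffices to work with the regular functionals themselves. The Stone--von Neumann theorem supplies the needed dictionary: a state is regular iff its GNS representation is quasiequivalent to $\pi_S$, so the regular states are exactly the states in the folium of $\pi_S$, i.e. those of the form $\rho \circ \pi_S$ for a normal state $\rho$ on $\mathscr{B}(\mathscr{H})$; extending by linearity, $V_0^Q = \{\rho\circ\pi_S : \rho \in \mathscr{B}(\mathscr{H})_*\}$, and each such functional factors through the normal extension as $\rho \circ \tilde{\pi}_S$.

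With this correspondence in hand, both inclusions are short. If $A \in \mathcal{J}$, then for every regular functional $\rho\circ\pi_S$ we have $(\rho\circ\tilde{\pi}_S)(A) = \rho(\tilde{\pi}_S(A)) = \rho(0) = 0$, so $A$ annihilates $V_0^Q$ and hence $A \in \mathcal{N}(V^Q)$. Conversely, if $A \in \mathcal{N}(V^Q)$, then $\rho(\tilde{\pi}_S(A)) = 0$ for every normal state $\rho$ on $\mathscr{B}(\mathscr{H})$; since the normal states separate the points of a von Neumann algebra, this forces $\tilde{\pi}_S(A) = 0$, i.e. $A \in \mathcal{J}$. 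This establishes $\mathcal{J} = \mathcal{N}(V^Q)$ and completes the argument.

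The main obstacle I anticipate is the precise identification of $V_0^Q$ with the predual $\mathscr{B}(\mathscr{H})_*$ pulled back along $\pi_S$---that is, cashing out Stone--von Neumann at the level of the whole folium rather than just irreducibility. Two points deserve care: first, that quasiequivalence of every regular GNS representation to $\pi_S$ really does identify the regular states with the \emph{full} normal state space of $\mathscr{B}(\mathscr{H})$ (and not merely a proper weak*-dense subset), so that the separation argument in the converse inclusion has enough functionals to work with; and second, that passing from regular states to the subspace $V_0^Q$ of regular functionals preserves normality, which follows since the normal functionals on a von Neumann algebra are the closed span of its normal states. The surjectivity and normality of $\tilde{\pi}_S$, by contrast, are standard features of the bidual construction and should require no new work.
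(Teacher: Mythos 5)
Your proposal is correct and takes essentially the approach the paper itself indicates: the paper offers no proof of Prop.~\ref{prop:weylreduction} beyond citing \citet{Fe17a} and remarking that, by Stone--von Neumann, the regular states form the folium of the irreducible Schr\"{o}dinger representation, and your argument---identifying $\mathcal{N}(V^Q)$ with the kernel of the normal extension $\tilde{\pi}_S$ of $\pi_S$ to the bidual (noting that weak* continuity of evaluations lets you replace $V^Q$ by $V_0^Q$) and then quotienting---is the natural fleshing-out of exactly that remark. The one premise you flag as delicate, that the regular functionals $V_0^Q$ coincide with $\{\rho\circ\pi_S : \rho\in\mathscr{B}(\mathscr{H})_*\}$ and not merely contain the span of the regular states, is the same identification the paper itself invokes without further argument (in the proof of Thm.~\ref{thm:classlim}, via Thm.~7.4.7 of \citet{KaRi97}), so your treatment stands on the same footing as the paper's.
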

\noindent This should be unsurprising because the well known Stone-von Neumann theorem already shows that the regular states form the folium of the Schr\"{o}dinger representation, which is irreducible.  But notice that there is a striking resemblance between the situation in Props. \ref{prop:classreduction} and \ref{prop:weylreduction}.  In both cases, we alter the algebra of observables by choosing an appropriate state space---$V^C$ or $V^Q$---and applying Thm. \ref{thm:reduction}.  We will clarify the relationship between these propositions by characterizing the relationship between $V_0^Q$, the collection of regular functionals on the Weyl algebra, and $V_0^C$,  the collection of countably additive Borel measures on $\mathbb{R}^{2n}$.  To do so, we now prove the main result.

\begin{thm}
\label{thm:classlim}
Let $\omega\in\mathcal{W}_h(\mathbb{R}^{2n})^*$ (for $h\neq 0$). Then $\omega$ is regular iff $\omega\circ\mathcal{Q}_h$ is a countably additive Borel measure on $\mathbb{R}^{2n}$ (i.e., $\omega\in V_0^Q$ iff $\omega\circ\mathcal{Q}_h\in V_0^C$).
\end{thm}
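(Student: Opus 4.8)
The plan is to trade the quantum statement for a purely classical one using the identity $\widehat{\omega}(W_0(x)) = (\omega\circ\mathcal{Q}_h)(W_0(x)) = \omega(W_h(x))$, valid since $\mathcal{Q}_h(W_0(x)) = W_h(x)$. This shows that $t\mapsto\omega(W_h(tx))$ and $t\mapsto\widehat{\omega}(W_0(tx))$ are literally the same function, so $\omega$ is regular on $\mathcal{W}_h(\mathbb{R}^{2n})$ exactly when the ray-restrictions $t\mapsto\widehat{\omega}(W_0(tx))$ are continuous for every $x$. Identifying $\widehat{\omega}\in AP(\mathbb{R}^{2n})^*$ with the complex regular Borel measure $\mu_{\widehat\omega}$ it determines on $\mathscr{P}$ via Riesz--Markov, the value $\widehat\omega(W_0(x)) = \int_{\mathscr{P}}\widehat{W_0(x)}\,d\mu_{\widehat\omega}$ is the Fourier--Stieltjes transform of $\mu_{\widehat\omega}$ at the character indexed by $x$. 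By Prop.~\ref{prop: ptws cts}, $\widehat\omega\in V_0^C$ iff $\mu_{\widehat\omega}$ is carried by $k[\mathbb{R}^{2n}]$, so the theorem becomes the assertion that this transform is ray-continuous iff $\mu_{\widehat\omega}$ is concentrated on the copy of $\mathbb{R}^{2n}$ sitting inside its Bohr compactification $\mathscr{P}$.

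The direction $\widehat\omega\in V_0^C \Rightarrow \omega$ regular is an immediate computation. If $\widehat\omega$ is integration against a finite (complex) measure $\mu$ on $\mathbb{R}^{2n}$, then $\widehat\omega(W_0(x)) = \int_{\mathbb{R}^{2n}}e^{ix\cdot y}\,d\mu(y)$; since $\abs{e^{ix\cdot y}} = 1$ is $\abs{\mu}$-integrable, dominated convergence makes $x\mapsto\widehat\omega(W_0(x))$ jointly continuous, hence a fortiori regular. This argument uses no positivity, so it holds for all bounded functionals.

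For the converse I would first assume $\omega$ is a state, so that $\widehat\omega$ is a state and $\phi(x):=\widehat\omega(W_0(x))$ is positive-definite: since $W_0(x_j)^*W_0(x_k) = W_0(x_k - x_j)$ in $AP(\mathbb{R}^{2n})$, one has $\sum_{j,k}\overline{c_j}c_k\,\phi(x_k-x_j) = \widehat\omega\big((\sum_j c_jW_0(x_j))^*(\sum_k c_kW_0(x_k))\big)\ge 0$. The regularity hypothesis supplies only continuity of $\phi$ along rays, so the key step is to promote this to joint continuity. Passing to the GNS representation of $\widehat\omega$, the operators $U(x):=\pi(W_0(x))$ form a unitary representation of the group $\mathbb{R}^{2n}$ (because $W_0(x)W_0(y) = W_0(x+y)$), and commutativity gives $\inner{U(z)\Omega}{U(te_i)U(z)\Omega} = \inner{\Omega}{U(te_i)\Omega} = \phi(te_i)$ for every $z$; thus each coordinate one-parameter group $t\mapsto U(te_i)$ has continuous diagonal matrix elements on the total set $\{U(z)\Omega\}$, hence is strongly continuous via $\norm{U(te_i)\psi-\psi}^2 = 2\,\mathrm{Re}(\norm{\psi}^2 - \inner{\psi}{U(te_i)\psi})$, and so $x\mapsto U(x) = U(x_1e_1)\cdots U(x_{2n}e_{2n})$ is jointly strongly continuous. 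Therefore $\phi(x) = \inner{\Omega}{U(x)\Omega}$ is continuous, Bochner's theorem produces a probability measure $\mu$ on $\mathbb{R}^{2n}$ with $\phi = \widehat\mu$, and since $\omega_\mu$ and $\widehat\omega$ agree on every $W_0(x)$ they agree on the dense subalgebra of polynomials and hence everywhere; so $\widehat\omega = \omega_\mu\in M(\mathbb{R}^{2n}) = V_0^C$.

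To upgrade from states to arbitrary regular functionals, I would use that, by the Stone--von Neumann theorem and Prop.~\ref{prop:weylreduction}, the regular functionals on $\mathcal{W}_h(\mathbb{R}^{2n})$ are precisely the normal functionals of the Schr\"odinger representation, i.e.\ those of the form $A\mapsto\mathrm{Tr}(\rho\,\pi_S(A))$ for trace-class $\rho$. Decomposing $\rho$ into its positive parts writes $\omega$ as a complex-linear combination of regular positive functionals, and since $\omega\mapsto\widehat\omega$ is linear while $V_0^C$ is a complex subspace, the state case already settled delivers $\widehat\omega\in V_0^C$. I expect the genuine obstacle to be this converse direction, and within it the passage from ray-continuity to joint continuity of the characteristic function: it is exactly here that positive-definiteness and the resulting one-parameter unitary groups must do the work, since a general ray-continuous function on $\mathbb{R}^{2n}$ need not be continuous, and a non-regular state such as the tracial state---for which $\widehat\omega(W_0(x)) = 0$ whenever $x\neq 0$ and $\mu_{\widehat\omega}$ is the Haar measure carried off $\mathbb{R}^{2n}$---shows the hypothesis cannot be dropped.
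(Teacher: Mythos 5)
Your proof is correct, and on the hard direction it takes a genuinely different route from the paper's. The shared skeleton is the same: both reduce to states via the Stone--von Neumann theorem and a trace-class decomposition of regular functionals (the paper cites Thm.~7.4.7 of Kadison and Ringrose for exactly the decomposition you describe), both pass to the GNS representation of the classical limit $\widehat\omega = \omega\circ\mathcal{Q}_h$, and both close with Prop.~\ref{prop: ptws cts}; the easy direction via dominated convergence is essentially identical, though your observation that it needs no positivity (and hence no prior reduction to states) is a small simplification. For regular $\Rightarrow$ countably additive, however, the paper invokes heavier machinery: the spectral theorem for the abelian algebra $C(\mathscr{P})$ to obtain a projection-valued measure $E$ on $\mathscr{P}$, then the SNAG theorem to produce a projection-valued measure $S\mapsto E(k[S])$ on $\mathbb{R}^{2n}$, from which $\mu_{\widehat\omega}(\mathscr{P}\setminus k[\mathbb{R}^{2n}])=0$ follows. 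You instead stay scalar-valued: positive-definiteness of $\phi(x)=\widehat\omega(W_0(x))$, promotion of ray-continuity to joint continuity, and Bochner's theorem, after which $\widehat\omega=\omega_\mu$ on the dense span of the $W_0(x)$. Your route is more elementary (Bochner is the scalar shadow of SNAG), and it has a further merit: it makes explicit a step the paper asserts without argument, namely that regularity makes $x\mapsto\pi_{\omega\circ\mathcal{Q}_h}(W_0(x))$ a weak-operator continuous representation of $\mathbb{R}^{2n}$ --- a priori regularity gives only ray-continuity of a single diagonal matrix element, and your commutativity identity $\inner{U(z)\Omega}{U(te_i)U(z)\Omega}=\phi(te_i)$ on the total set $\{U(z)\Omega\}$, combined with $\norm{U(te_i)\psi-\psi}^2=2\,\mathrm{Re}\bigl(\norm{\psi}^2-\inner{\psi}{U(te_i)\psi}\bigr)$ and the coordinate-wise factorization, is precisely the argument that underwrites the paper's assertion (and that the SNAG theorem requires as a hypothesis). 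What the paper's approach buys in exchange is stronger output --- the projection-valued measure on $\mathbb{R}^{2n}$ itself, i.e.\ the spectral data of the position--momentum group --- but for the theorem as stated your Bochner argument suffices, and your closing example of the tracial state correctly locates why the converse genuinely needs positive-definiteness.
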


\begin{proof}

Some preliminaries.  Notice that since the Stone-von Neumann theorem implies the regular functionals are isomorphic to the predual of $\mathscr{B}(L^2(\mathbb{R}^n))$, we know from Thm. 7.4.7 of \citet[(p. 485)][]{KaRi97} that each regular bounded linear functional can be decomposed into a linear combination of regular states.  Hence, it suffices to prove the claim for the case where $\omega$ is a state, which we will assume in what follows.

$(\Rightarrow)$ Suppose $\omega$ is a regular state.  Consider the GNS representation $(\pi_{\omega\circ\mathcal{Q}_h},\mathscr{H}_{\omega\circ\mathcal{Q}_h})$ for the state $\omega\circ\mathcal{Q}_h$ on $\mathcal{W}_0(\mathbb{R}^{2n})$.  We denote the pure state space on $\mathcal{W}_0(\mathbb{R}^{2n})$ by $\mathscr{P}:=\mathscr{P}(\mathcal{W}_0(\mathbb{R}^{2n}))$ and recall that $\mathcal{W}_0(\mathbb{R}^{2n})\cong C(\mathscr{P})$.  Thm. 5.2.6 of \citet[(p. 315)][]{KaRi97} implies there is a projection-valued measure $E$ on $\mathscr{P}$ with the following property.  For any vector $\varphi\in\mathscr{H}_{\omega\circ\mathcal{Q}_h}$, define the Borel measure $\mu_\varphi$ on $\mathscr{P}$ by
\[\mu_{\varphi}(S) := \inner{\varphi}{E(S)\varphi}\]
for any Borel set $S\subseteq \mathscr{P}$.  Then we have
\[\inner{\varphi}{\pi_{\omega\circ\mathcal{Q}_h}(W_0(x))\varphi} = \int_{\mathscr{P}} \hat{W}_0(x)~d\mu_{\varphi}\]
for any $x\in\mathbb{R}^{2n}$ (where $\hat{W}_0(x)$ is here the surrogate of the function $W_0(x)$ on the space $\mathscr{P}$ of pure states).  In particular, for the choice $\varphi = \Omega_{\omega\circ\mathcal{Q}}$, the cyclic vector representing the state $\omega\circ\mathcal{Q}$ from the GNS construction, it follows that $\mu_{\varphi}$ is the Riesz-Markov measure $\mu_{\omega\circ\mathcal{Q}_h}$ on $\mathscr{P}$ corresponding to $\omega\circ\mathcal{Q}_h$.

Now, since $\omega$ is regular, the mapping
\[x\mapsto\pi_{\omega\circ\mathcal{Q}_h}(W_0(x))\]
for all $x\in\mathbb{R}^{2n}$ is a weak operator continuous unitary representation of the topological group $\mathbb{R}^{2n}$.  Thus, the SNAG theorem (\citet[][p. 243]{BrRo87}) implies that we have a projection-valued measure on $\mathbb{R}^{2n}$:
\[S\mapsto E(k[S])\]
for any Borel set $S\subseteq\mathbb{R}^{2n}$ (Recall $\mathbb{R}^{2n}$ is self-dual as a topological group). Here, the map $k:\mathbb{R}^{2n}\rightarrow\mathscr{P}$ is the compactification associated to $\mathcal{W}_0(\mathbb{R}^{2n})$.  The justification is as follows: for any $\varphi\in\mathscr{H}_{\omega\circ\mathcal{Q}_h}$, define the Borel measure $\hat{\mu}_\varphi$ on $\mathbb{R}^{2n}$ by
\[\hat{\mu}_{\varphi}(S) := \mu_{\varphi}(k[S])\]
for any Borel set $S\subseteq\mathbb{R}^{2n}$.  We know $\hat{\mu}_\varphi$ satisfies
\[\inner{\varphi}{\pi_{\omega\circ{Q}_h}(W_0(x))\varphi} = \int_{\mathbb{R}^{2n}}W_0(x)~d\hat{\mu}_\varphi\]
for any $x\in\mathbb{R}^{2n}$ (where $W_0(x)$ is now considered as a function on $\mathbb{R}^{2n}$).

Hence, since $E\circ k$ is a projection valued measure, it must be the case that 
\[\mu_\varphi(k[\mathbb{R}^{2n}]) = \hat{\mu}_\varphi(\mathbb{R}^{2n}) = 1\]
and thus
\[\mu_\varphi(\mathscr{P}\setminus k[\mathbb{R}^{2n}]) = 0\]
This implies that for the choice $\varphi = \Omega_{\omega\circ\mathcal{Q}_h}$,
\[\mu_{\omega\circ\mathcal{Q}_h}(\mathscr{P}\setminus k[\mathbb{R}^{2n}]) = 0\]
Prop. \ref{prop: ptws cts} now implies $\omega\circ\mathcal{Q}_h$ is a countably additive regular Borel measure on $\mathbb{R}^{2n}$.

$(\Leftarrow)$ Suppose $\omega\circ\mathcal{Q}_h$ is a countably additive regular Borel probability measure on $\mathbb{R}^{2n}$.  By Prop. \ref{prop: ptws cts}, we know that the Riesz-Markov measure $\mu_{\omega\circ\mathcal{Q}_h}$ corresponding to $\omega\circ\mathcal{Q}_h$ satisfies
\[\mu_{\omega\circ\mathcal{Q}_h}(\mathscr{P}\setminus k[\mathbb{R}^{2n}]) = 0\]
Now we know that for any $x\in\mathbb{R}^{2n}$,
\begin{equation*}
\begin{split}
\omega\circ\mathcal{Q}_h(W_0(tx)) &= \int_{\mathscr{P}}W_0(tx)~d\mu_{\omega\circ\mathcal{Q}_h}\\
&=\int_{k[\mathbb{R}^{2n}]}W_0(tx)~d\mu_{\omega\circ\mathcal{Q}_h}
\end{split}
\end{equation*}
for all $t\in\mathbb{R}$.  Since the functions $W_0(tx)$ are uniformly bounded by 1 on the domain $k[\mathbb{R}^{2n}]$, the dominated convergence theorem (\citet[][p. 17, Thm. I.11]{ReSi80}) implies that
\[t\mapsto\omega\circ\mathcal{Q}_h(W_0(tx))\]
is a continuous function, which shows that $\omega$ is regular.
\end{proof}

Recall that the state $\omega$ determines a constant continuous field of states $\{\omega_h\}_{h\in[0,1]}$ on the deformation quantization discussed in \S\ref{sec:limits} with classical limit $\hat{\omega} = \omega\circ\mathcal{Q}_h$.  Thus, Thm. \ref{thm:classlim} establishes the central claim of this paper: a state $\omega$ on $\mathcal{W}_h(\mathbb{R}^{2n})$ is regular iff its classical limit $\omega\circ\mathcal{Q}_h$ is a countably additive Borel probability measure on $\mathbb{R}^{2n}$.

\section{Discussion}

Thm. \ref{thm:classlim} shows that we only need the regular quantum states to explain the success of classical physics.  As such, this result bears on the question of which quantum states we ought to take to be physically significant in quantum theories.  For example, some have been interested in constructing quantum theories using non-regular states and their corresponding representations.\citep{Ha04,CoVuZa07}  But these non-regular states are not needed for the explanation of the success of classical physics.  I do not claim that this rules out approaches to physics involving non-regular states, but it does demonstrate some of their counterintuitive features.

Furthermore, this result shows that one can use the classical limit of quantum theories, or more specifically quantum states, to guide the construction of quantum theories.  Prop. \ref{prop:weylreduction} shows that we can transform the Weyl algebra to an algebra whose state space consists precisely of the states whose classical limits are manifestly physically significant.  This may be desirable if one believes that the only quantum states that are physical are the ones whose classical limit is physical.  The result of this process is precisely the orthodox formulation of non-relativistic quantum mechanics for systems whose phase space is $\mathbb{R}^{2n}$.

The main result of this paper thus suggests a possible methodology for choosing an algebra of observables to use in the construction of new quantum theories. As one will notice from a survey of the literature, there are many different approaches to constructing algebras of quantum observables.  One needs to make a choice about what \emph{type} of algebra to use, e.g., *-algebras or C*-algebras (See, e.g., \citet{Re16}).  And even once one has made this choice, one needs to choose which algebra of a given type is appropriate for modeling a given physical system (See, e.g., \citet{AsIs92}).  The methodology the result of this paper suggests is that we look for an algebra that has an appropriate state space---in the sense that the classical limits of the allowed quantum states are physical states in the classical theory.  I do not claim that this methodology is guaranteed to work in the quantization of all classical theories, but merely that it works in the simplest case.  Since this approach is motivated by the desired explanations of the success of previous theories, I suggest that it might be fruitful to apply the same methodology in the quantization of other classical theories.

I hope that the result of this procedure provides some further understanding of quantization and the classical limit, and that this procedure can be extended to illuminate the quantization of other classical theories.  In this vein, I'd like to outline a number of further questions for investigation.

\begin{enumerate}

\item Do the algebras $\mathcal{W}_h(\mathbb{R}^{2n})^{**}/\mathcal{N}(V^Q)\cong \mathscr{B}(\mathscr{H})$ with the natural composition of quotient maps with quantization maps provide a strict or continuous quantization\footnote{Here we would of course require a slight technical alteration of the conditions given in \citet{La98b} to allow for the weakly continuous extension of the quantization from a Poisson subalgebra rather than merely the norm continuous extension. See \citet{Fe17b}.} of the algebra $\mathcal{W}_0(\mathbb{R}^{2n})^{**}/\mathcal{N}(V^C)\cong B_{R}(\mathbb{R}^{2n})$?  If so, is this quantization equivalent to the weakly continuous extension of the Berezin quantization of $C_0(\mathbb{R}^{2n})\subseteq B_{R}(\mathbb{R}^{2n})$ to the compact operators $\mathcal{K}(\mathscr{H})\subseteq\mathscr{B}(\mathscr{H})$?

\item Can one extend the results of \S \ref{sec:states} for field systems whose phase space is not locally compact?  In particular, should we understand the countably additive regular Borel probability measures on a non-locally compact phase space as modeling the physically significant states?  Should we understand algebras of bounded measurable functions on a non-locally compact phase space as modeling physically significant observables?\footnote{See \citet{GrNe09} for an attempt to extend results on regularity to infinite dimensions.  In this vein, one would also hope to make connections with much of the recent work on algebraic quantum field theory.\citep{BrDuFr09,BrFr09,FrRe15,DuFr01,Re16}}

\item Can one apply the methodology of \S \ref{sec:result} to recover the quantization of a system whose phase space is not simply connected?  There are known procedures for arriving at the quantization of a theory whose phase space has the form $G/H$ for some locally compact abelian group $G$ and closed subgroup $H$.\citep{La90,La93a,La95}  Is it possible to arrive at the same quantum theory by quantizing a system with phase space $G$ and restricting attention to only states with the appropriate classical limit?

\end{enumerate}

\noindent I hope that answers to these (or similar) questions might aid both our understanding of the classical limit and the development of tools for constructing strict and continuous deformation quantizations for further systems of physical interest.

\section*{Acknowledgments}

\noindent I would like to thank Thomas Barrett, Samuel Fletcher, Adam Koberinski, James Weatherall, and an anonymous referee for helpful comments and discussions.

\bibliography{algebraicmethods.bib}
\bibliographystyle{mla}

\end{document}